\newcommand{\newstuff}[1]{\textcolor{black}{#1}}
\newtheorem{theorem}{Theorem}
\newtheorem{corollary}{Corollary}
\newtheorem{remark}{Remark}
\newtheorem{proposition}{Proposition}
\begin{document}
\title{The Impact of Coherence Diversity on MIMO Relays}
\author{Fan~Zhang,~\IEEEmembership{Member,~IEEE} and Aria~Nosratinia,~\IEEEmembership{Fellow,~IEEE} 
\thanks{This work was made possible in part by the grant 1718551 from the National Science Foundation.} 
\thanks{The authors are with the Department of Electrical \& Computer Engineering, University of Texas at Dallas, Email: fan.zhang5@utdallas.edu; aria@utdallas.edu}}
\maketitle

\def\Wtilde{\widetilde{\W}}
\def\What{\hat{\W}}
\def\Xhat{\hat{\X}}
\def\Xtilde{\widetilde{\X}}
\def\Yhat{\hat{\Y}}
\def\Ytilde{\widetilde{\Y}}
\def\Hset{{\mathcal H}}
\def\Heq{\overline{\mathbf H}}
\def\EE{\mathbb{E}}
\def\A{{\mathbf A}}
\def\R{R}
\def\d{d}
\def\u{U}
\def\uset{\mathcal U}
\def\r{r}
\def\rdash{\hat{r}}
\def\q{q}
\def\bigO{O}
\def\ZeroMat{\mathbf 0}
\def\Onevec{\mathbf 1}
\def\XiNoise{\mathbf N}
\def\tX{{\mathbf X}}
\def\tx{{\mathbf x}}
\def\tH{{\textbf H}}
\def\tY{{\mathbf Y}}
\def\ty{{\mathbf y}}
\def\tW{{\mathbf W}}
\def\tw{{\mathbf w}}
\def\tI{{\mathbf I}}
\def\tA{{\mathbf A}}
\def\tR{{\mathbf R}}
\def\tC{{\mathbf C}}
\def\tD{{\mathbf D}}
\def\tB{{\mathbf B}}
\def\tU{{\mathbf U}}
\def\tG{{\mathbf G}}
\def\tV{{\mathbf V}}
\def\tT{{\mathbf T}}
\def\tu{{\mathbf u}}
\def\tv{{\mathbf v}}
\def\Mm{{\mathbf M}}

\graphicspath{{figures/}}

\makeatletter%
\if@twocolumn%
\newcommand{\Figwidth}{0.9\columnwidth}%
\newcommand{\hspaceonetwocol}[2]{\hspace{#2}}
\newcommand{\includeonetwocol}[2]{#2}
\def\twocolbreak{\nonumber\\ &}%
\def\twocolnewline{\nonumber\\}%
\def\twocolAlignMarker{&}%
\else
\newcommand{\Figwidth}{4.5in}%
\newcommand{\hspaceonetwocol}[2]{\hspace{#1}}
\newcommand{\includeonetwocol}[2]{#1}
\def\twocolbreak{}%
\def\twocolnewline{}%
\def\twocolAlignMarker{}%
\fi%
\makeatother%


\begin{abstract}
This paper studies MIMO relays with non-identical link coherence times, a frequently occurring condition when, e.g., the nodes in the relay channel do not all have the same mobility, or the scatterers around some nodes have different mobility compared with those around other nodes. Despite its practical relevance, this condition, known as {\em coherence diversity}, has not been studied in the relay channel. This paper studies the performance of MIMO relays and proposes efficient transmission strategies under coherence diversity. Since coherence times have a prominent impact on channel training, we do not assume channel state is available to the decoder for free; all channel training resources are accounted for in the calculations. A product superposition technique is employed at the source which allows a more efficient usage of degrees of freedom when the relay and the destination have different training requirements. Varying configurations of coherence times are studied. The interesting case where the different link coherence intervals are not a multiple of each other, and therefore the coherence intervals do not align, is studied. Relay scheduling is combined with the product superposition to obtain further gains in degrees of freedom. The impact of coherence diversity is further studied in the presence of multiple parallel relays.
\end{abstract}

\begin{IEEEkeywords}
Relay, cooperation, high mobility, fading, coherence time, channel training, channel state information, multiple relay
\end{IEEEkeywords}

\section{Introduction}
There has been a surge of interest in high-mobility wireless communications~\cite{zhou2015high,ghazal2016non,sun2014maximizing}, wherein the co-existence of low-mobility and high-mobility users  has been an accepted fact~\cite{chih1993microcell}. Naturally, faster nodes lead to links with shorter coherence intervals, and slower nodes experience links with longer coherence intervals. A deeper understanding of relay performance under unequal link coherence times provides new tools and techniques for high-mobility wireless communications. Relaying in high-mobility scenarios has been acknowledged as an important topic~\cite{zhang2020performance,chakraborty2017joint,khattabi2016improved}, but the implications of relaying under unequal link coherence intervals has been an open problem.

This paper studies the degrees of freedom of relaying under unequal coherence intervals for the three links within the relay channel, a condition that occurs commonly in practice, but its effect on the performance of relays has been unknown.\footnote{The early version of some of the results of the present paper appeared in~\cite{Fan_2019}.} The performance of fading relay channel under {\em equal} coherence intervals has been extensively studied~\cite{Krikidis_2012,Hu_2017,Jamali_2015,Shaqfeh_2016,Simoni_2016,Jamali_2014,wang2005capacity,fan2007mimo,bolcskei2006capacity}, and also the disparity of coherence intervals has been studied in broadcast channels~\cite{Li_2012,Li_2015,Fadel_disparity}, multiple access channels~\cite{Fadel_disparity}, and frequency-selective broadcast channels~\cite{Fadel_2019}. The impact of hybrid channel state information (CSI) on MISO broadcast channel with unequal coherence times was studied in~\cite{Fadel_2020}.

For the MIMO relay with coherence diversity,  we assume there is no free CSI at the receivers, since unequal coherence times impact channel training and therefore the assumption of free CSI would distort and obscure important features of the problem. In addition, no CSI is assumed at transmitters. We propose a product superposition at the source, a signaling strategy first introduced in~\cite{Li_2012}. Product superposition is a technique that allows a more efficient utilization of channel degrees of freedom when links with unequal coherence intervals coexist in a network. The gains of product superposition arise from a natural match between the product form of the superposition, and the product operation by which fading channels affect transmit signals. This match allows one signaling component to ``disappear'' into the effective channel gain of another link, and in that way reduce the interference. This principle is then employed to reuse pilot slots of one user for data of another user, without contaminating the pilots. As will become clear in the sequel, we use this technique to harvest gains when the source-destination and source-relay links have unequal coherence intervals.

A key novelty of the present paper is in its introduction of effective methods for controlling the interference arising from both the full-duplex relay and the source on all received pilots, while utilizing product superposition at the source to achieve transmission efficiency. This clearly separates the methods and analysis of the present study from previous works in the area of coherence diversity, e.g.,~\cite{Li_2012,Li_2015,Fadel_disparity,Fadel_2019}.

A summary of the contributions of this paper is as follows. We begin by proving the following: under identical coherence intervals for the source-relay, relay-destination, and source-destination links, the relay cannot provide any DoF gains compared with the direct link alone. This is a simple but important negative result that is independent of antenna configurations at the three nodes, and is used as a reference. When coherence intervals are unequal, we start with a representative example, design signaling appropriately for the unequal coherence intervals, and show the resulting DoF gains. Then we broaden the result by removing the constraints from the length and alignment of the coherence blocks, showing that the DoF gains persist in the more general case. Further, a new scheme combining the product superposition and relay scheduling is proposed, motivated by the following observation: Whenever a pilot-based relay is activated, the relay pilots impose a cost (in degrees of freedom) due to their interference with source-destination transmission. In the new scheme, this cost is compared against the relay gains, and the relay is activated accordingly. We show the extent to which this new scheme improves the degrees of freedom of the relay channel. This paper also studies multiple parallel relays under non-identical coherence intervals, wherein transmission strategies are studied and achievable degrees of freedom are calculated.

This paper is organized as follows. Section~\ref{Sec:Sys} presents the system model used in this paper, and in particular formalizes the way different coherence times are introduced into the analysis. In Section~\ref{sec:2user}, the foundations of the proposed techniques are demonstrated via an analysis of a single-relay system where, for simplicity, we assume that the three coherence intervals in the relay channel are pairwise divisible.  This simplifies the analysis by ensuring that an integer number of coherence intervals from each faster fading link fits into the coherence interval of each slower fading link, avoiding fractional overlaps between coherence intervals. Section~\ref{sec:scheduling} introduces the relay activation/deactivation scheduling in order to maximize the end-to-end DoF returns from the relay operation, as mentioned above. Section~\ref{sec:general} highlights techniques that generalize the earlier results of the paper to arbitrary coherence times (non-integer ratios). Section~\ref{sec:multiple} extends the results to multiple relays. Section~\ref{sec:conclusion} provides concluding remarks.

\textit{Notation}: Bold lower-case letters, e.g. $\tx$, denote column vectors. Bold upper-case letters, e.g. $\Mm$,  denote  matrices.
The Euclidean norm is denoted by $\|\tx\|$ and the Frobenius norm $\|\Mm\|_F$. 
The trace, conjugate,
transpose and conjugated transpose of $\Mm$ are denoted $tr(\Mm) $, 
$\Mm\*$, $\Mm^T$ and
$\Mm^H$, respectively. $\Mm^{-T} \triangleq (\Mm^{-1})^T$ and $\Mm^{-H} \triangleq (\Mm^{-1})^H$. $\tI_m$ and $\mathbf{0}_{m \times n}$ denote the $m \times m$ identity matrix and $m\times n$ zero matrix, respectively, and the dimensions are omitted if cleared from the context. \newstuff{The base of the logarithm throughout the paper is 2.}

\begin{figure}
\centering
\includegraphics[width=\Figwidth]{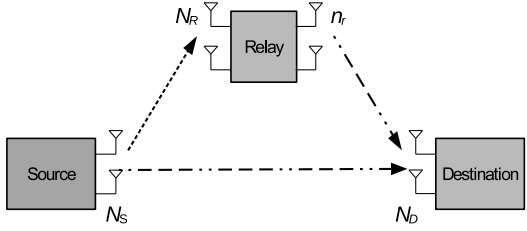}
\caption{MIMO Relay Channel with Coherence Diversity.}
\label{fig:RC_Strc}
\end{figure}

\section{System Model}
\label{Sec:Sys}
Consider a MIMO relay in full-duplex mode as in Figure~\ref{fig:RC_Strc}. The source and destination are equipped with $N_S$ and $N_D$ antennas, respectively. The relay has $N_R$ receive antennas and $n_R$ transmit antennas. The number of active (powered) relay transmit antennas in a transmission scheme is shown with $n_r$, which is optimized in each scenario. Obviously $n_r \leq n_R$.
The received signals at the relay and destination are:
\begin{align}
\ty_R & = \tH_{SR} \tx_S + \tw_{R} \\
\ty_D & = \tH_{SD} \tx_S + \tH_{RD} \tx_R + \tw_{D} ,
\label{Eq:Sys}
\end{align}
where $\tx_S$ and $\tx_R$ are signals transmitted from the source and relay. 
$\tw_R$ and $\tw_D$ are independent and identically distributed (i.i.d.) white Gaussian noise. The relay operates in the full-duplex mode. The relay self-interference residue can be modeled by a Gaussian distribution and can be additively combined with thermal noise into a single term $\tw_R$ for the purposes of analysis. This line of reasoning is by  now well-established for full duplex relay analysis, among others see \cite{DuarteThesis2012,Hong:CommMag14,8274994}. Bold upper case $\tH_{SR}$, $\tH_{RD}$ and $\tH_{SD}$ are channel gain matrices whose entries are i.i.d.\ Gaussian. Channel gain entries and noise components are zero-mean and are normalized to unit variance. Channel gains experience block fading, remaining constant during the coherence intervals which are, respectively, of length $T_{SR}$, $T_{RD}$ and $T_{SD}$, satisfying $T_{SR} \geq 2\max (N_S,N_R)$, $T_{RD} \geq 2\max (n_R,N_D)$ and $T_{SD} \geq 2\max (N_S,N_D)$. Channel gains are independent across blocks \cite{Zheng_2002}. The source and relay obey power constraints $\EE[ \text{tr}(\tx_S \tx_S')] \le \rho$ and $\EE[\text{tr}(\tx_R \tx_R')]  \le \rho$. We assume there is no CSIT at the source nor relay. \newstuff{If not stated, there is no free CSIR at the relay nor destination}.

The source sends messages to the destination with rate $R(\rho)$ at signal-to-noise ratio $\rho$. The achievable degrees of freedom at the destination achieving rate $R(\rho)$ are defined as
\begin{equation}
d = \lim_{\rho \rightarrow \infty} \frac{R(\rho)}{\log (\rho)}.
\end{equation}

\section{Aligned Coherence Blocks}
\label{sec:2user}
We first show the relay cannot provide any gains in degrees of freedom under {\em identical} coherence intervals \newstuff{and present the optimal degrees of freedom achieved by the direct link alone. We will use this optimal achievable degrees of freedom as a reference in the remaining of the paper}. Then we analyze the scenarios where the coherence times are unequal. \newstuff{Compared with the degrees of freedom achieved by the direct link alone, additional degrees of freedom can be obtained via product superposition, exploiting the coherence diversity between source-destination and source-relay links.}

\subsection{Identical Coherence Times}

\begin{proposition}
\label{Thm:Equal}
When relay link coherence times are identical ($T_{SD} = T_{SR} = T_{RD} = T$), the relay does not improve the degrees of freedom of the source-destination link, namely:
\begin{align}
d  =  \min (N_S,N_D) (1 - \frac{\min (N_S,N_D)}{T}).
\end{align}
\end{proposition}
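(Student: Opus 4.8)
The plan is to prove the claim in two halves: an achievability bound showing the direct link alone attains the stated degrees of freedom, and a converse showing the relay cannot exceed it. For achievability I would discard the relay and invoke the degrees of freedom of a non-coherent block-fading MIMO link with $N_S$ transmit and $N_D$ receive antennas and coherence time $T$, which by \cite{Zheng_2002} equals $M^\ast(1-M^\ast/T)$ with $M^\ast = \min(N_S,N_D,\lfloor T/2\rfloor)$. The standing assumption $T \ge 2\max(N_S,N_D)$ forces $\lfloor T/2\rfloor \ge \min(N_S,N_D)$, so $M^\ast = \min(N_S,N_D)$ and the direct link already achieves $\min(N_S,N_D)(1-\min(N_S,N_D)/T)$. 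It then remains to show that this value is also an upper bound on $d$ for the full relay channel.

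For the converse I would use the cut-set bound, which upper bounds $d$ by $\min(d_{\mathrm{BC}},d_{\mathrm{MAC}})$, where $d_{\mathrm{BC}}$ is the pre-log of $I(\tx_S;\ty_R,\ty_D\,|\,\tx_R)$ (source versus relay-and-destination) and $d_{\mathrm{MAC}}$ is the pre-log of $I(\tx_S,\tx_R;\ty_D)$ (source-and-relay versus destination), each maximized over the joint input distribution. The key point is that each cut reduces to a single non-coherent MIMO link whose degrees of freedom are again governed by \cite{Zheng_2002}, so that the relay only alters the effective antenna counts and never the coherence time, which stays fixed at $T$ on every link.

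For the MAC cut, $\ty_D = \tH_{SD}\tx_S + \tH_{RD}\tx_R + \tw_D$ is a non-coherent MIMO channel from the stacked input (at most $N_S+n_r$ transmit dimensions) to $N_D$ receive antennas, with i.i.d.\ combined gain $[\tH_{SD}\ \tH_{RD}]$ and coherence time $T$; hence $d_{\mathrm{MAC}} \le \min(N_S+n_r,N_D)(1-\min(N_S+n_r,N_D)/T)$, and whenever $N_D \le N_S$ this collapses to $N_D(1-N_D/T)$. For the BC cut the obstacle is that conditioning on $\tx_R$ does not remove the relay's contribution at the destination, because $\tH_{RD}$ is unknown under the no-CSIR assumption; I would resolve this with a genie that reveals $\tH_{RD}$ (equivalently $\tH_{RD}\tx_R$) to the destination, which can only enlarge the mutual information. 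After subtracting the now-known interference, the relay and destination jointly observe $\tx_S$ through the i.i.d.\ channel $[\tH_{SR};\tH_{SD}]$ with $N_R+N_D$ receive antennas, giving $d_{\mathrm{BC}} \le \min(N_S,N_R+N_D)(1-\min(N_S,N_R+N_D)/T)$; since $N_R+N_D \ge N_D \ge N_S$ whenever $N_S \le N_D$, this collapses to $N_S(1-N_S/T)$.

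Finally I would close with a short case split. When $N_S \le N_D$ the broadcast cut yields $d \le N_S(1-N_S/T)$, and when $N_D \le N_S$ the multiple-access cut yields $d \le N_D(1-N_D/T)$; in both cases $\min(d_{\mathrm{BC}},d_{\mathrm{MAC}}) \le \min(N_S,N_D)(1-\min(N_S,N_D)/T)$, matching the achievable value, where $T\ge 2\max(N_S,N_D)$ is used throughout to keep $M^\ast$ out of the $\lfloor T/2\rfloor$ regime. The main obstacle is the BC cut, where the non-coherent relay-destination interference must be neutralized by the genie before the single-user non-coherent MIMO degrees-of-freedom result can be applied; the remainder is bookkeeping of antenna counts together with the monotonicity of $x\mapsto x(1-x/T)$ on $[0,T/2]$.
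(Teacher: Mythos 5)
Your proposal is correct and takes essentially the same route as the paper's proof: the cut-set bound, a case split on $N_S$ versus $N_D$, reduction of the broadcast cut to a noncoherent point-to-point channel with $N_S$ transmit and $N_R+N_D$ receive antennas and of the multiple-access cut to one with $N_S+n_r$ transmit and $N_D$ receive antennas, the Zheng--Tse DoF formula (with $T \geq 2\max(N_S,N_D)$ keeping $\lfloor T/2 \rfloor$ inactive), and direct-link achievability. Your genie step revealing $\tH_{RD}$ so that the term $\tH_{RD}\tx_R$ can be subtracted in the broadcast cut is a valid tightening of a detail the paper leaves implicit, but it refines rather than changes the argument.
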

\begin{proof}
From the cut-set bound, 
\begin{align}
R \leq \min \{\text{I}(\tx_S;\ty_R,\ty_D | \tx_R),\text{I}(\tx_S,\tx_R ; \ty_D)\}.
\end{align}

If $N_S \leq N_D$, consider the broadcast component of the cutset bound: $R \leq \text{I}(\tx_S;\ty_R,\ty_D | \tx_R)$. Because $T_{SD} = T_{SR} = T$ and there is no CSIT, the right hand side in the inequality is upper bounded by the capacity of a point-to-point channel having $N_S$ transmit antennas and $(N_D + N_R)$ receive antennas with coherence time $T$, which is $N_S(1 - \frac{N_S}{T}) \log \rho + o(\log \rho)$. Then we have
\begin{align}
d \leq N_S(1 - \frac{N_S}{T}),
\end{align}
which can be achieved by the direct link alone.

If $N_S \geq N_D$, we focus on the MAC component of the cutset bound: $R \leq \text{I}(\tx_S,\tx_R ; \ty_D)$. Since $T_{SD} = T_{SR} = T$, the right hand side  is upper bounded by the capacity of a point-to-point channel having $(N_S + n_R)$ transmit antennas and $N_D$ receive antennas with coherence time $T$, whose capacity is $N_D(1 - \frac{N_D}{T}) \log \rho + o(\log \rho)$. Then we have
\begin{align}
d \leq N_D(1 - \frac{N_D}{T}),
\end{align}
and this degrees of freedom can also be achieved by the direct link alone. This completes the proof.
\end{proof}

\subsection{A Representative Example for Unequal Coherence Times}
\label{sec:toy_unequal}
To pave the way for the analysis to come, and to motivate the direction taken by this paper, we provide an example whose purpose is to illuminate the main features of the problem in a simple setting.
In this example, the source and relay are equipped with two antennas and the destination is equipped with three antennas. The coherence times of the three links are as follows: $T_{SD} = T_{RD} = 8$ and $T_{SR} = \infty$, i.e., the source-relay channel is static, therefore the cost of training over this link is amortized over a large number of samples, so we can assume the relay knows $\tH_{SR}$. 

The source uses product superposition, sending
\begin{equation}
\tX_S = \tU [\tI_{2} , \mathbf{0}_{2 \times 1} , \tV],
\end{equation}
where $\tU \in \mathbb{C}^{2 \times 2}$ and $\tV \in \mathbb{C}^{2 \times 5}$.

At the relay, the received signal is
\begin{align}
\tY_R  = \tH_{SR} \tX_S + \tW_{R} 
       = \tH_{SR} \tU [\tI_{2} , \mathbf{0}_{2 \times 1} , \tV] + \tW_{R}.
\end{align}

The received signal during the first two time slots is
\begin{align}
\tY_R^{\prime}  =  \tH_{SR} \tU  +\tW_{R}^{\prime}.
\end{align}
The relay knows $\tH_{SR}$ and decodes $\tU$. The signal decoded by the relay in the previous block is $\tU^{\prime}$ and the two rows of $\tU^{\prime}$ are $\tu_1^{\prime}, \tu_2^{\prime} \in \mathbb{C}^{1 \times 2}$.

The relay powers only one antenna for transmission and sends
\begin{align}
\tX_R = [\mathbf{0}_{1 \times 2} , 1  , \tu_1^{\prime} , \tu_2^{\prime} , 0] \in \mathbb{C}^{1 \times 8}.
\end{align}

The received signal at the destination is:
\begin{align}
\nonumber
\tY_D & = \tH_{SD} \tX_S + \tH_{RD} \tX_R + \tW_{D} \\ \nonumber
	  & = [\tH_{SD},\tH_{RD}]  
	  \begin{bmatrix}
	  \tU [\tI_{2} , \mathbf{0}_{2 \times 1} , \tV] \\ \nonumber
	  \mathbf{0}_{1 \times 2} ,1  , \tu^{\prime}_1 , \tu^{\prime}_2 , 0
	  \end{bmatrix}
	  + \tW_D \\ 
	  & = [\tH_{SD}\tU,\tH_{RD}] \Big[\tI_{3 } , 
	  \begin{bmatrix}
      \tV \\
      \tu_1^{\prime} , \tu_2^{\prime} , 0
      \end{bmatrix}
	  \Big]
	  + \tW_D,
\end{align}
The destination estimates the equivalent channel $[\tH_{SD} \tU,\tH_{RD}]$ in the first three time slots and decodes $\tV, \tu_1^{\prime}$ and $\tu_2^{\prime}$. In this proposed scheme, the destination can achieve the degrees of freedom $(2 \times 5 + 2 \times 1 \times 2)/8 = 1.75$. In comparison, a traditional relaying scheme assigns pilots and training according to the smallest coherence time and achieves the degrees of freedom $2 \times (8 - 2)/8 = 1.5$.

\subsection{Coherence Conditions $T_{SR} = \infty$}

When $T_{SR} = \infty$, the training resources required for the source-relay link can be amortized over a long period and are therefore negligible. This scenario occurs when source and relay are either stationary, have a dominant line-of-sight component, or both.
\begin{theorem}
\label{Thm:TSRinfty}
In a relay channel with $T_{SR} = \infty$ and $N_S < N_D$, the following degrees of freedom are achievable, where $N_S^\ast \triangleq \min \{N_S,N_R\}$:

\noindent
If $T_{SD} = T_{RD}$,
\begin{align}
\label{dof:infty_SD=RD}
d = \max_{n_r} (1 & - \frac{N_S + n_r}{T_{SD}}) \nonumber \\ & \times \min \{N_S  + n_r, N_S + \frac{N_S^\ast N_S}{T_{SD} - n_r - N_S} \}.
\end{align}
If $T_{RD} = K T_{SD}$,
\begin{align}
\label{eq:infty2}
d  =  & \max_{n_r} (1 - \frac{N_S + n_r}{T_{SD}}) \nonumber \\
& \times \min \bigg\{(N_S + n_r)(1 +  \frac{(K-1)n_r}{K(T_{SD} - n_r - N_S)}), \nonumber \\
& \quad \quad \quad \quad \quad \quad \quad N_S(1 + \frac{K N_S^\ast + (K - 1)n_r}{K(T_{SD} - n_r -N_S)})\bigg\}.
\end{align}
If $T_{SD} = K T_{RD}$,
\begin{align}
\label{eq:infty3}
d = & \max_{n_r}(1 - \frac{n_r}{T_{RD}})  \nonumber \\
& \times \min \bigg\{(N_S + n_r)(1 + \frac{N_S}{K(T_{RD} - n_r)}), \nonumber \\
& \quad \quad \quad \quad \quad \quad \quad \quad \quad N_S(1 + \frac{K N_S^\ast - N_S}{K(T_{RD} - n_r)})\bigg\}.
\end{align}
\end{theorem}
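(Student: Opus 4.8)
The claim is an achievability statement, so the plan is to exhibit an explicit scheme and count its degrees of freedom in each of the three coherence configurations; no converse is needed. In every case the source uses product superposition, transmitting $\tX_S = \tU[\tI_{N_S}, \mathbf{0}_{N_S \times n_r}, \tV]$, where the square precoder $\tU \in \mathbb{C}^{N_S \times N_S}$ carries information first decoded by the relay and later forwarded, while $\tV$ carries information sent directly to the destination. Because $T_{SR}=\infty$, the relay knows $\tH_{SR}$ and, from the asymptotically noiseless observation $\tH_{SR}\tU$, recovers the $N_S^\ast N_S$ symbols of $\tU$ that are resolvable through $\tH_{SR}$ (all of $\tU$ when $N_R\ge N_S$); this rank limit is the source-relay bottleneck that surfaces in the second argument of every minimum. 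The decisive structural fact, already exhibited in Section~\ref{sec:toy_unequal}, is that the destination observes the composite channel $\Heq = [\tH_{SD}\tU, \tH_{RD}]$ of size $N_D\times(N_S+n_r)$: the source-silence pattern $\mathbf{0}_{N_S\times n_r}$ interlocks with the relay pilots to synthesize an $\tI_{N_S+n_r}$ training block, so $\Heq$ is estimated without either node contaminating the other's pilots, after which $\tV$ and the forwarded $\tU$ are decoded jointly.

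First I would treat $T_{SD}=T_{RD}$, the direct lift of the example. Estimating $\Heq$ costs $N_S+n_r$ slots and leaves $T_{SD}-N_S-n_r$ data slots, which is the source of the prefactor $(1-\tfrac{N_S+n_r}{T_{SD}})$. Per data slot the destination can resolve $N_S+n_r$ streams provided $N_S+n_r\le N_D$, but the relayed fraction cannot exceed what the relay actually decodes over the source-relay link, namely $N_S^\ast N_S$ symbols per block; equating this decoding budget with the relay's transmit budget $n_r(T_{SD}-N_S-n_r)$ produces the competing argument $N_S+\tfrac{N_S^\ast N_S}{T_{SD}-n_r-N_S}$. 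Taking the binding constraint gives the minimum, and a one-dimensional search over $n_r$ yields~\eqref{dof:infty_SD=RD}.

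The two nested cases are handled by amortizing the slower link's training across the faster link's blocks, keeping the same two-bottleneck comparison. When $T_{RD}=KT_{SD}$ the relay-destination channel is constant over $K$ source blocks, so the $n_r$ relay pilots are paid only once per super-frame while the $N_S$ source pilots recur every block; this frees $n_r$ extra data slots in each of the last $K-1$ blocks, so the usable data slots total $K(T_{SD}-N_S-n_r)+(K-1)n_r$, against which the relay forwards at most $KN_S^\ast N_S$ symbols, and normalizing by the super-frame length $KT_{SD}$ reproduces the two arguments of~\eqref{eq:infty2}. Symmetrically, when $T_{SD}=KT_{RD}$ the source-destination channel is constant over the super-frame, so one holds $\tU$ fixed and estimates $\tH_{SD}\tU$ just once, amortizing the source pilots, while $\tH_{RD}$ must be retrained in each of the $K$ relay blocks; because the relay (knowing $\tH_{SR}$ and $\tU$) can strip $\tU$ and decode the evolving source data, it forwards fresh information in every relay block up to the $KN_S^\ast$ limit, and the product structure lets the amortized pilot slots continue to carry information rather than stand idle. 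The same stream-count versus source-relay-decoding comparison, divided by $KT_{RD}$, gives~\eqref{eq:infty3}.

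The step I expect to be the main obstacle is precisely this interference and slot bookkeeping in the nested cases. One must verify that over a full super-frame the source pilots, the relay pilots, the direct stream $\tV$, and the relayed stream can be packed so that every pilot block stays uncontaminated; the product form $\tU[\cdots]$ is essential here, because it lets the already-estimated factor $\tH_{SD}\tU$ be folded into $\Heq$ (or subtracted) instead of appearing as interference on the relay pilots. Simultaneously one must honor the relay's one-block decoding delay and the rank cap $N_S^\ast$, and confirm that the destination's $N_D$ receive dimensions suffice to resolve all $N_S+n_r$ streams. Establishing that the resulting slot counts are exactly $K(T_{SD}-N_S-n_r)+(K-1)n_r$ and its case-3 analogue is the delicate part; once they are fixed, the concluding maximization over $n_r$ is routine.
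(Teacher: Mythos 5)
Your treatment of the first two configurations is essentially the paper's own proof. For $T_{SD}=T_{RD}$ you use the identical product superposition $\tU[\tI_{N_S},\mathbf{0}_{N_S\times n_r},\tV]$, the same interlocked pilot pattern yielding the composite channel $[\tH_{SD}\tU,\tH_{RD}]$ estimated in $N_S+n_r$ slots, and the same $\min$ of the relay's decode budget $N_S^\ast N_S$ against its transmit budget $n_r(T_{SD}-N_S-n_r)$, followed by the optimization over $n_r$. For $T_{RD}=KT_{SD}$ your amortization is also exactly the paper's: relay pilots paid once per super-frame, a fresh $\tU^k$ in every sub-block, data slots totalling $K(T_{SD}-N_S-n_r)+(K-1)n_r$ and decode budget $KN_S^\ast N_S$, normalized by $KT_{SD}$ — these are precisely the tallies behind \eqref{eq:infty2}.

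In the third configuration, however, your mechanism for the relay's information supply does not hold up. Your scheme agrees with the paper's (a single $\tU$ per super-frame so that $\tH_{SD}\tU$ is estimated only once, with $\tH_{RD}$ retrained by $n_r$ pilots in each of the $K$ relay blocks while the source stays silent), but you then claim the relay ``forwards fresh information in every relay block up to the $KN_S^\ast$ limit'' because, knowing $\tH_{SR}$ and $\tU$, it can strip $\tU$ and decode the evolving source data. Decoding the destination-bound $\tV_S^k$ at the relay creates no new end-to-end degrees of freedom — that data already reaches the destination directly — and since $\tU$ must be held fixed across the super-frame (otherwise $\tH_{SD}\tU$ would require re-estimation, destroying the very amortization you rely on), the only relay-dedicated message is the single $\tU$, worth $N_S^\ast N_S$ per super-frame of length $KT_{RD}$, not $KN_S^\ast N_S$. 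This is exactly how the paper's proof counts it: it forwards only $\tU$ via block-Markov operation, and its tally gives the first argument as $(N_S+n_r)\bigl(1-\frac{N_S}{K(T_{RD}-n_r)}\bigr)$ with a single $N_S^\ast N_S$ in the second. (You have, in effect, reverse-engineered the printed formula \eqref{eq:infty3}; but that formula's ``$+$'' sign and $KN_S^\ast$ term are inconsistent with the paper's own scheme and with the $T_{SR}\to\infty$ limit of the finite-coherence analogue \eqref{eq:TH2D3}, so the printed statement, not the scheme, is the anomaly. Your proof should therefore replace the strip-and-decode claim with the explicit decode/transmit minimum for the single forwarded $\tU$, rather than invent a mechanism to reach $KN_S^\ast$.)
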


\begin{figure*}
\centering
\includegraphics[width=4.5in]{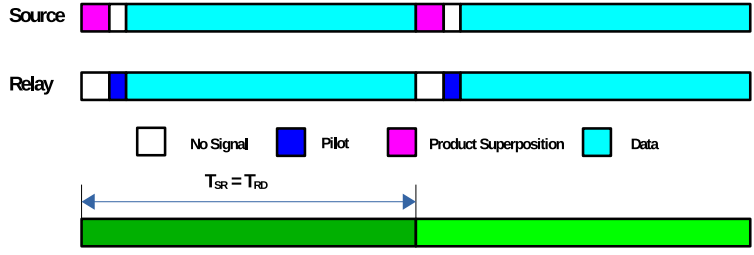}
\caption{Signaling Structure for Product Superposition.}
\label{fig:Sig_CohT}
\end{figure*}

\begin{proof}
When $T_{SD} = T_{RD}$, the proposed signaling structure is shown in Figure~\ref{fig:Sig_CohT}. In this case, the source sends the product superposition signal:
\begin{equation}
\tX_S = \tU [\tI_{N_S} , \mathbf{0}_{N_S \times n_r} , \tV_S],
\end{equation}
where $n_r \leq \min \{N_S, N_D-N_S\}$, $\tU \in \mathbb{C}^{N_S \times N_S}$ and $\tV_S \in \mathbb{C}^{N_S \times (T_{SD} - n_r - N_S)}$.

At the relay, the received signal is
\begin{align}
\tY_R  &= \tH_{SR} \tX_S + \tW_{R} \nonumber\\
      &= \tH_{SR} \tU [\tI_{N_S} , \mathbf{0}_{N_S \times n_r} , \tV_S] + \tW_R.
\end{align}

The received signal during the first $N_S$ time slots is
\begin{align}
\tY_R^{\prime}  =  \tH_{SR} \tU  +\tW_{R}^{\prime}.
\end{align}
The relay knows $\tH_{SR}$ and decodes $\tU$. Assume the message decoded by the relay in the previous block is $\tU^{\prime}$. The relay uses $n_r$ transmit antennas, sending
\begin{align}
\tX_R = [\mathbf{0}_{n_r \times N_S} , \tI_{n_r}  , \tV_R] \in \mathbb{C}^{n_r \times T_{SD}},
\end{align}
where $\tV_R \in \mathbb{C}^{n_r \times (T_{SD} - n_r - N_S)}$.

The received signal at the destination is
\begin{align}
\label{eq:infty-1-Yd}
\nonumber
\tY_D & = \tH_{SD} \tX_S + \tH_{RD} \tX_R + \tW_{D} \\ \nonumber
	  & = [\tH_{SD},\tH_{RD}]  
	  \begin{bmatrix}
	  \tU [\tI_{N_S} , \mathbf{0}_{N_S \times n_r} , \tV_S] \\ \nonumber
	  \mathbf{0}_{n_r \times N_S} , \tI_{n_r}  , \tV_R
	  \end{bmatrix}
	  + \tW_D \\
	  & = [\tH_{SD}\tU , \tH_{RD}] \Bigg[\tI_{(N_S + n_r)} ,
      \begin{bmatrix}
      \tV_S \\
      \tV_R
      \end{bmatrix}\Bigg]
	  + \tW_D.
\end{align}
The destination estimates the equivalent channel $[\tH_{SD}\tU, \tH_{RD}]$ during the first $(N_S + n_r)$ time slots and then decodes $\tV_S$ and $\tV_R$. At the destination, the decoded messages have two parts: $\tV_S$ from the source and $\tV_R$ from the relay, which provide degrees of freedom $N_S(T_{SD} - n_r - N_S)$ and $n_r(T_{SD} - n_r - N_S)$. The message in $\tV_R$ is from $\tU^{\prime}$. The degrees of freedom the relay can decode from $\tU^{\prime}$ are $N_S^\ast N_S$. The rate of the message emitted by the relay is bounded by the rate it decodes from the source. Taking the minimum of the degrees of freedom the relay can transmit and decode, and add it up with the degrees of freedom provided by the source, and optimizing the number of transmit antennas at the relay, the end-to-end degrees of freedom are \eqref{dof:infty_SD=RD}.

When $T_{RD} = K T_{SD}$, our scheme has a transmission block from the source that has length $KT_{SD}$, which we divide into sub-blocks of length $T_{SD}$. 
During the first sub-block, the source sends the signal
\begin{equation}
\label{sig:TH1C2XS1}
\tX_S^1 = \tU^1 [\tI_{N_S} , \mathbf{0}_{N_S \times n_r} , \tV_S^1],
\end{equation}
where $n_r \leq \min \{N_S, N_D-N_S\}$, $\tU^1 \in \mathbb{C}^{N_S \times N_S}$ and $\tV_S^1 \in \mathbb{C}^{N_S \times (T_{SD} - n_r - N_S)}$.

The relay decodes $\tU^1$ and uses $n_r$ transmit antennas and sends
\begin{align}
\label{sig:TH1C2XR1}
\tX_R^1 = [\mathbf{0}_{n_r \times N_S} ~ \tI_{n_r }  ~ \tV_R^1] \in \mathbb{C}^{n_r \times T_{SD}},
\end{align}
where $\tV_R^1 \in \mathbb{C}^{n_r \times (T_{SD} - n_r - N_S)}$. The received signal at the destination is
\begin{align}
\tY_D^1 \twocolAlignMarker = \tH_{SD}^1 \tX_S^1 + \tH_{RD} \tX_R^1 + \tW_{D}^1 \twocolnewline{}
	  \twocolAlignMarker = [\tH_{SD}^1\tU^1,\tH_{RD}] \Big[\tI_{(N_S + n_r)} ,
	  \begin{bmatrix}
      \tV_S^1 \\
      \tV_R^1
      \end{bmatrix}
	  \Big]
	  + \tW_D^1.
\end{align}
In the first sub-block, the three signal components $\tV_S^1$, $\tV_R$ and $\tU^1$ respectively provide for the degrees of freedom $N_S(T - n_r - N_S)$, $n_r(T - n_r - N_S)$ and $N_S^\ast N_S$. 

During the following $(K - 1)$ sub-blocks, the source sends the signal
\begin{equation}
\label{sig:TH1C2XS2}
\tX_S^k = \tU^k [\tI_{N_S} , \tV_S^k], \ 2 \leq k \leq K,
\end{equation}
where $\tV_S^k \in \mathbb{C}^{N_S \times (T_{SD} - n_r - N_S)}$.

The relay uses $n_r$ transmit antennas and sends:
\begin{align}
\label{sig:TH1C2XR2}
\tX_R^k = [\mathbf{0}_{n_r \times N_S} , \tV_R^k] \in \mathbb{C}^{n_r \times T_{SD}},
\end{align}
where $\tV_R^k \in \mathbb{C}^{n_r \times (T_{SD} - N_S)}$. The received signal at the destination is
\begin{align}
\tY_D^k \twocolAlignMarker = \tH_{SD}^k \tX_S^k + \tH_{RD} \tX_R^k + \tW_{D}^k \twocolnewline
	  \twocolAlignMarker = [\tH_{SD}^k\tU^k, \tH_{RD}] \Big[\tI_{N_S} ,
	  \begin{bmatrix}
      \tV_S^k \\
      \tV_R^k
      \end{bmatrix}
	  \Big]
	  + \tW_D^k.
\end{align}

During Sub-block $k$, the destination can decode $\tV_S^k$, $\tV_R^k$ and $\tU^k$, which respectively provide degrees of freedom $N_S(T - N_S)$, $n_r(T - N_S)$ and $N_S^\ast N_S$. Therefore, the end-to-end degrees of freedom are \eqref{eq:infty2}.

When $T_{SD} = K T_{RD}$, our source transmission block has length $KT_{RD}$, with sub-blocks of length $T_{RD}$. For the first sub-block, the source uses product superposition, sending
\begin{equation}
\tX_S^1 = \tU [\tI_{N_S} , \mathbf{0}_{N_S \times n_r} , \tV_S^1] \in \mathbb{C}^{N_S \times T_{RD}}.
\end{equation}
In the remaining $K-1$ sub-blocks with length $T_{RD}$, the source sends
\begin{equation}
\tX_S^k = \tU [\mathbf{0}_{N_S \times n_r} , \tV_S^k] \in \mathbb{C}^{N_S \times T_{RD}},
\end{equation}
where $n_r \leq \min \{N_S, N_D-N_S\}$, $\tU \in \mathbb{C}^{N_S \times N_S}$, $\tV_S^1 \in \mathbb{C}^{N_S \times (T_{RD} - n_r - N_s)}$ and $\tV_S^k \in \mathbb{C}^{N_S \times (T_{RD} - n_r)}$, $k = 2,3,\dots,K$.

The received signal during the first $N_S$ time slots is
\begin{align}
\tY_R^{\prime}  =  \tH_{SR} \tU  +\tW_{R}^{\prime}.
\end{align}
The relay knows $\tH_{SR}$ and decodes $\tU$. Then it uses $n_r$ transmit antennas and sends
\begin{align}
\tX_R^1 = [\mathbf{0}_{n_r \times N_S} , \tI_{n_r}  , \tV_R^1] \in \mathbb{C}^{n_r \times T_{RD}},
\end{align}
during the first sub-block with length $T_{RD}$. In the remaining $K-1$ sub-block the relay sends 
\begin{align}
\tX_R^k = [\tI_{n_r}  , \tV_R^k] \in \mathbb{C}^{n_r \times T_{RD}}.
\end{align}

During the first sub-block, the received signal at the destination is
\begin{align}
\nonumber
\tY_D^1 & = \tH_{SD} \tX_S^1 + \tH_{RD}^1 \tX_R^1 + \tW_{D}^1 \\ \nonumber
	  & = [\tH_{SD},\tH_{RD}^1]  
	  \begin{bmatrix}
	  \tU [\tI_{N_S } , \mathbf{0}_{N_S \times n_r} , \tV_S^1] \\ 
	  \mathbf{0}_{n_r \times N_S} , \tI_{n_r}  , \tV_R^1
	  \end{bmatrix}
	  + \tW_D^1 \\
	  & = [\tH_{SD}\tU,\tH_{RD}^1] \Bigg[\tI_{(N_S + n_r)} , 
	  \begin{bmatrix}
      \tV_S^1 \\
      \tV_D^1
      \end{bmatrix}
	  \Bigg]
	  + \tW_D^1.
\end{align}

The destination estimates the equivalent channel $[\tH_{SD}\tU, \tH_{RD}^1]$ during the first $(N_S + n_r)$ time slots and decodes $\tV_S^1$. 

During Sub-block $k$, the received signal at the destination is
\begin{align}
\nonumber
\tY_D^k & = \tH_{SD} \tX_S^k + \tH_{RD}^k \tX_R^k + \tW_{D}^k \\ \nonumber
	  & = [\tH_{SD},\tH_{RD}^k]  
	  \begin{bmatrix}
	  \tU [\mathbf{0}_{N_S \times n_r} , \tV_S^k] \\ \nonumber
	  \tI_{n_r }  , \tV_R^k
	  \end{bmatrix}
	  + \tW_D^k \\
	  & = \Bigg[\tH_{RD}^k , [\tH_{SD}\tU,\tH_{RD}^k]
	  \begin{bmatrix}
      \tV_S^k \\
      \tV_R^k
      \end{bmatrix}
	  \Bigg]
	  + \tW_D^k.
\end{align}

The destination estimates $\tH_{RD}^k$ during the first $n_r$ time slots. Because the destination already estimated $\tH_{SD}\tU$, it knows the equivalent channel $[\tH_{SD}\tU,~\tH_{RD}^k]$. During the remaining time slots, the destination decodes $\tV_S^k$, $\tV_R^k$, which respectively provide degrees of freedom $N_S(T - n_r -N_s)$ and $N_S(T - n_r)$. The data matrix $\tV_R^1$ provides degrees of freedom $n_r(T - n_r - N_s)$ and $\tV_R^k (2 \leq k \leq K)$ provides degrees of freedom $n_r(T - n_r)$. Adding up the degrees of freedom and optimizing the number of transmit antennas at the relay produces \eqref{eq:infty3}.
This completes the proof.
\end{proof}

\begin{corollary}
The degrees of freedom in Theorem~\ref{Thm:TSRinfty} are optimal under channel conditions $T_{SD}=T_{SR}$ and antenna configuration:
\begin{align}
\label{eq:optimal_thm2}
    &(N_D^{\ast} - N_S)(T_{SD} - N_D^{\ast}) \leq N_S^\ast N_S
\end{align}
where $N_D^{\ast} \triangleq \min \{N_S + n_R, N_D \}$. In this case, the DoF is: 
\begin{align}
    d_{opt} = N_D^\ast(1 - \frac{N_D^\ast}{T_{SD}}).
\end{align}
\end{corollary}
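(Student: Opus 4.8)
The plan is to establish optimality by sandwiching the achievable DoF of Theorem~\ref{Thm:TSRinfty} between a cut-set converse and a matching achievability, both specialized to the equal-coherence case $T_{SD}=T_{RD}\triangleq T$. The converse will come entirely from the MAC term of the cut-set bound, paralleling the argument already used in Proposition~\ref{Thm:Equal}; the achievability will amount to selecting the single value $n_r=N_D^\ast-N_S$ in the already-proven formula \eqref{dof:infty_SD=RD} and showing that condition \eqref{eq:optimal_thm2} forces the inner minimum there to collapse to $N_D^\ast(1-N_D^\ast/T)$.

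For the converse I would start from $R \le \min\{\text{I}(\tx_S;\ty_R,\ty_D\,|\,\tx_R),\ \text{I}(\tx_S,\tx_R;\ty_D)\}$ and retain only the MAC term, which is legitimate since the cut-set bound is a minimum. Granting the source and relay full cooperation turns this cut into a single non-coherent point-to-point MIMO link with $N_S+n_R$ transmit antennas, $N_D$ receive antennas, and a single coherence length $T$; this is exactly where the hypothesis $T_{SD}=T_{RD}$ is needed, so that $\ty_D$ is governed by one block-fading interval rather than two misaligned ones. Since $\tH_{SR}$ does not enter $\ty_D$, the relay's knowledge of it confers no advantage across this cut, and there is neither CSIT nor free CSIR. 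The non-coherent MIMO pre-log then yields $d \le \min\{N_S+n_R,N_D\}\big(1-\min\{N_S+n_R,N_D\}/T\big)=N_D^\ast(1-N_D^\ast/T)$, where the block-length constraint $T\ge 2N_D\ge 2N_D^\ast$ guarantees that the $\lfloor T/2\rfloor$ term in the pre-log is inactive and the effective dimension is exactly $N_D^\ast$.

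For achievability I would evaluate \eqref{dof:infty_SD=RD} at $n_r=N_D^\ast-N_S=\min\{n_R,N_D-N_S\}$. With $N_S+n_r=N_D^\ast$, the two arguments of the inner minimum become $N_D^\ast$ and $N_S+N_S^\ast N_S/(T-N_D^\ast)$, and condition \eqref{eq:optimal_thm2}, rewritten as $N_D^\ast\le N_S+N_S^\ast N_S/(T-N_D^\ast)$, is precisely the statement that the first argument is the smaller one. Hence the product equals $(1-N_D^\ast/T)N_D^\ast=d_{opt}$, so the maximum over $n_r$ is at least $d_{opt}$. I also need $n_r$ to be admissible, i.e. $n_r\le\min\{N_S,N_D-N_S\}$ and $n_r\le n_R$; the bounds $n_r\le N_D-N_S$ and $n_r\le n_R$ are immediate from $N_D^\ast\le N_D$ and $N_D^\ast\le N_S+n_R$, while $n_r\le N_S$ follows from \eqref{eq:optimal_thm2} itself: since $T-N_D^\ast\ge N_D^\ast\ge N_S$ we get $(N_D^\ast-N_S)N_S\le N_S^\ast N_S\le N_S^2$, hence $N_D^\ast\le 2N_S$. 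Combining the converse with this achievability pins the DoF to $d_{opt}=N_D^\ast(1-N_D^\ast/T)$, and incidentally certifies $n_r=N_D^\ast-N_S$ as the optimal relay antenna count.

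The main obstacle is the converse rather than the algebra. One must argue carefully that relaxing the relay cut to full source-relay cooperation, and discarding the broadcast cut, reduces the problem to a clean single-coherence non-coherent MIMO channel to which the known pre-log applies; the equal-coherence hypothesis is doing essential work, because with misaligned coherence blocks the destination's observation no longer behaves like a single block-fading MIMO channel and the bound $N_D^\ast(1-N_D^\ast/T)$ would not follow directly. The remaining delicate point is that condition \eqref{eq:optimal_thm2} must simultaneously collapse the inner minimum and keep the chosen $n_r$ feasible; as shown above, both reduce to the single inequality $N_D^\ast\le 2N_S$, which the condition supplies.
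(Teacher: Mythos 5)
Your proposal is correct and takes essentially the same route as the paper: the identical MAC cut-set converse (full source--relay cooperation under $T_{SD}=T_{RD}$ with aligned blocks yielding a single non-coherent block-fading channel with $N_S+n_R$ transmit and $N_D$ receive antennas, hence DoF at most $N_D^\ast(1-N_D^\ast/T_{SD})$), and the identical achievability, setting $n_r=N_D^\ast-N_S$ in \eqref{dof:infty_SD=RD} and using \eqref{eq:optimal_thm2}, equivalently $n_r\le N_S^\ast N_S/(T_{SD}-n_r-N_S)$, to collapse the inner minimum. Your additional verification that this $n_r$ is admissible (in particular $n_r\le N_S$, deduced from \eqref{eq:optimal_thm2} together with $T_{SD}\ge 2N_D^\ast$) is a minor detail the paper leaves implicit, and you correctly read the corollary's stated condition $T_{SD}=T_{SR}$ as the intended $T_{SD}=T_{RD}$.
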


\begin{proof}
For achievability, the relay activates $n_r = N_D^\ast - N_S$ antennas for transmission. Because the condition \eqref{eq:optimal_thm2} holds (equivalent to $n_r \leq \frac{N_S^\ast N_S}{T_SD - n_r - N_S}$), according to Theorem~\ref{Thm:TSRinfty}, the degrees of freedom 
\begin{align*}
     \frac{1}{T_{SD}}\{N_S(T_{SD} - n_r - N_S) + & n_r(T_{SD} - n_r - N_S)\} \nonumber \\
    & = N_D^\ast(1 - \frac{N_D^\ast}{T_{SD}})
\end{align*}
are achievable.
For the converse, from the cut-set bound, the capacity of the relay is upper bounded by $I(\tY_D;\tX_R,\tX_S)$. Because the coherence times of the source-destination and relay-destination links are identical and the coherence blocks are aligned, this mutual information is equivalent to the capacity of a point-to-point channel with $N_S + n_R$ transmit antennas and $N_D$ receive antennas with coherence time $T_{SD}$. The degrees of freedom upper bound for this point-to-point channel is $N_D^\ast(1 - \frac{N_D^\ast}{T_{SD}})$. This completes the proof.
\end{proof}

\begin{corollary}
When $N_S < N_D, T_{SD} = T_{RD}$ and $T_{SR}=\infty$, the achievable degrees of freedom achieved by product superposition are strictly greater than that of source-destination link alone.
\end{corollary}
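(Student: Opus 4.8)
The plan is to show that the maximization in \eqref{dof:infty_SD=RD} strictly exceeds its value at $n_r=0$, which I will first identify with the direct-link degrees of freedom. Evaluating the right-hand side of \eqref{dof:infty_SD=RD} at $n_r=0$ gives $(1-\frac{N_S}{T_{SD}})\min\{N_S,\,N_S+\frac{N_S^\ast N_S}{T_{SD}-N_S}\}$; since $N_S^\ast N_S>0$ the minimum is attained by the first argument, so this equals $N_S(1-\frac{N_S}{T_{SD}})$, which is exactly the point-to-point DoF of the $N_S\times N_D$ source--destination link with coherence $T_{SD}$ (the reference DoF for $N_S<N_D$ from Proposition~\ref{Thm:Equal}). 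Because the achievable DoF is a maximum over $n_r$, it therefore suffices to exhibit one feasible integer $n_r\ge 1$ whose objective value is strictly larger; feasibility of $n_r=1$ is immediate from the constraint $n_r\le\min\{N_S,N_D-N_S\}$ together with $N_D>N_S$.

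Next I would analyze the objective $f(n_r)=(1-\frac{N_S+n_r}{T_{SD}})\,g(n_r)$, where $g(n_r)=\min\{N_S+n_r,\ N_S+\frac{N_S^\ast N_S}{T_{SD}-n_r-N_S}\}$, by determining which branch of the minimum is active. For small $n_r$ the first branch is active, since at $n_r=0$ the second branch strictly exceeds the first. On this branch, writing $s=N_S+n_r$, the objective becomes $s(1-\frac{s}{T_{SD}})$, a downward parabola in $s$ that is strictly increasing for $s<T_{SD}/2$. The key inequality is $T_{SD}\ge 2\max(N_S,N_D)=2N_D>2N_S$, which places the entire feasible range $s\in[N_S,N_D]$ strictly to the left of the vertex; hence $f$ is strictly increasing at $n_r=0$, and $f(1)>f(0)$ whenever $n_r=1$ still lies on the first branch. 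Concretely, in that regime $f(1)-f(0)=1-\frac{2N_S+1}{T_{SD}}>0$.

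The step I expect to be the main obstacle is the case where the breakpoint of the minimum falls below $n_r=1$, i.e. $T_{SD}-1-N_S>N_S^\ast N_S$, so that $n_r=1$ already lies on the second (relay-limited) branch. Here I would reparametrize by $u=T_{SD}-N_S-n_r$, turning the second branch into the linear form $f=\frac{N_S u+N_S^\ast N_S}{T_{SD}}$, which is decreasing in $n_r$; evaluating at $n_r=1$ yields the clean identity $f(1)-f(0)=\frac{(N_S^\ast-1)N_S}{T_{SD}}$. This is strictly positive exactly when $N_S^\ast=\min\{N_S,N_R\}\ge 2$, and degenerates to equality at the boundary $N_S^\ast=1$, where the single recoverable stream over the source--relay link only offsets the pilot it consumes at the destination. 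I would therefore carry out the argument under $N_S^\ast\ge 2$ (the generic configuration for which product superposition is meaningful) and flag $N_S^\ast=1$ as the degenerate edge case; the remainder is the routine branch identification together with the two short computations above.
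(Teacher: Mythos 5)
Your proposal follows essentially the same route as the paper's proof: fix $n_r=1$, split on which branch of the minimum in \eqref{dof:infty_SD=RD} is active, and compute the two increments over the direct-link baseline, which match the paper's exactly --- $1-\frac{2N_S+1}{T_{SD}}$ on the source-limited branch (strictly positive since $T_{SD}\ge 2N_D\ge 2N_S+2$) and $\frac{(N_S^\ast-1)N_S}{T_{SD}}$ on the relay-limited branch. Where you diverge is a genuine catch rather than a gap: on the relay-limited branch the paper asserts only ``Obviously, $d\ge d^{\prime}$,'' which is non-strict, and when $N_S^\ast=1$ that branch is in fact always the active one at $n_r=1$ (since $T_{SD}-1-N_S\ge N_S+1>N_S^\ast N_S$); moreover, for every feasible $n_r\ge 1$ the relay-limited branch is active and decreasing in $n_r$, so the maximum in \eqref{dof:infty_SD=RD} then equals $d^{\prime}=\frac{N_S}{T_{SD}}(T_{SD}-N_S)$ exactly and the corollary's strict inequality fails for $N_S^\ast=\min\{N_S,N_R\}=1$ (e.g., a single-antenna relay). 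Your version --- carrying the strict claim under $N_S^\ast\ge 2$ and flagging $N_S^\ast=1$ as the degenerate boundary where the single recoverable stream only pays for the extra pilot --- is the correct statement, and it exposes an edge case the paper's own proof silently glosses over.
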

\begin{proof}
From Theorem~\ref{Thm:TSRinfty}, the direct link alone can achieve the following degrees of freedom: $d^{\prime} = \frac{N_S}{T_{SD}} \times (T_{SD} - N_S)$. Choose $n_r = 1$. If $n_r > \frac{N_S^\ast N_S}{T_{SD} - n_r - N_S}$, the degrees of freedom achieved by the proposed scheme are
\begin{align}
d & \geq \frac{1}{T_{SD}} (N_S(T_{SD} - 1 - N_S) + N_S N_S^{\ast}) \nonumber \\
& = \frac{N_S}{T_{SD}}(T_{SD} - N_S + N_S^\ast-1).
\end{align}
Obviously, $d \geq d^{\prime}$;
if $n_r \leq \frac{N_S^\ast N_S}{T_{SD} - n_r - N_S}$, the degrees of freedom achieved are
\begin{align}
d \twocolAlignMarker \geq \frac{1}{T_{SD}} (N_S(T_{SD} - 1 - N_S) + (T_{SD} - 1 - N_S)) \twocolnewline{}
     \twocolAlignMarker = \frac{N_S + 1}{T_{SD}}(T_{SD} - 1 - N_S).
\end{align}
Because $T_{SD} \geq 2N_D \geq 2N_S + 2$, 
\begin{align}
d & \geq   \frac{N_S + 1}{T_{SD}}(T_{SD} - 1 - N_S)  \nonumber \\
             & > \frac{N_S}{T_{SD}}(T_{SD} - N_S) = d^{\prime}.
\end{align}
This completes the proof.
\end{proof}

\begin{remark}
Theorem~\ref{Thm:TSRinfty} highlights strictly positive gains. But under some conditions, e.g. when the relay-destination coherence time is too short, the relay pilot requirements will eat into the gains. For example, when $N_S = N_R =3,N_D = 5, T = 4, K = 3$, $d = 2, d^{\prime} = \frac{9}{4}$, the relay does not provide any DoF gains. In this calculation we assumed $n_r>0$, i.e., the relay is not inactive.
\end{remark}

\begin{remark}
When $T_{SR}=T_{SD}=\infty$, the direct link can achieve the DoF outer bound, which is $\min (N_S,N_D)$. When $T_{SR}=T_{RD}=\infty$, set $K$ in \eqref{eq:infty2} to $\infty$, the DoF is $\max_{n_r} \min \big\{(N_S + n_r)(1 - \frac{N_S}{T_{SD}}), N_S (1 - \frac{N_S - N_S^\ast}{T_{SD}}) \big\}$.
\end{remark}

\begin{remark}
In the forward decoding version of block Markov encoding, used in our paper, the relay does not transmit data in the first block. This lack of transmission amortizes to zero over a long sequence of blocks, thus need not be included in the rate or DoF analysis.
\end{remark}

\subsection{Coherence Conditions $T_{SR} < \infty$}
When $T_{SR}$ is bounded, one can no longer assume that the relay knows $\tH_{SR}$ with negligible training cost. The following theorem states the achievable degrees of freedom. 
\begin{theorem}
\label{thm:int}
In a relay channel with link coherence times $T_{SR} = K T_{SD}$, and antenna configuration $N_S < N_D$, the following degrees of freedom are achievable:

\noindent
If $T_{SD} = T_{RD}$,
\begin{align}
\label{eq:Th2D1}
d = &  \max_{n_r} (1 - \frac{N_S + n_r}{T_{SD}}) \nonumber \\
& \times \min \{N_S + n_r,N_S + \frac{(K - 1)N_S^\ast N_S}{K(T_{SD} - n_r - N_S)} \}.
\end{align}
If $T_{RD} = K' T_{SD}$ and all coherence length pairs have integer ratios, equivalently  $\frac{\max(K,K')}{\min(K,K')}\in \mathbb N$,
\begin{align}
\label{eq:TH2D2}
d =  & \max_{n_r} (1 - \frac{N_S + n_r}{T_{SD}}) \twocolAlignMarker \nonumber \\
&\times \min \Big\{(N_S + n_r)(1 + \frac{(K'-1)n_r}{K'(T_{SD} - n_r - N_S)}), \nonumber \\
& \quad \quad  N_S(1 + \frac{K'(K - 1)N_S^\ast + K(K' - 1)n_r}{KK'(T_{SD} - n_r -N_S)})\Big\}.
\end{align}
If $T_{SD} = K' T_{RD}$,
\begin{align}
\label{eq:TH2D3}
d = & \max_{n_r}(1 - \frac{n_r}{T_{RD}}) \nonumber\\
& \times \min \Big\{(N_S + n_r)(1 - \frac{N_S}{K'(T_{RD} - n_r)}),\nonumber \\
&\quad \quad \quad \quad \quad \quad N_S(1 + \frac{(K - 1) N_S^\ast - K N_S}{K K'(T_{RD} - n_r)})\Big\}.
\end{align}
\end{theorem}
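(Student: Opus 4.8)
The plan is to build, for each of the three coherence configurations, a product-superposition block-Markov scheme that copies the corresponding case of Theorem~\ref{Thm:TSRinfty} and then charges for the single resource that was free there but is not free here: estimation of $\tH_{SR}$ at the relay. Since $T_{SR}=KT_{SD}$, I would group the source transmission into super-blocks of length $KT_{SD}$ (length $\mathrm{lcm}(K,K')\,T_{SD}$ in the mixed case), partition each into $T_{SD}$ sub-blocks, and inside every sub-block reuse verbatim the source signal $\tX_S=\tU[\tI_{N_S},\mathbf{0},\tV_S]$ and relay signal of the matching case of Theorem~\ref{Thm:TSRinfty}. The destination processing is then unchanged: the leading $\tU$ folds into the equivalent channel $\tH_{SD}\tU$, which is estimated as a pilot, and $\tV_S,\tV_R$ are decoded from the remaining columns.

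The new ingredient is estimating $\tH_{SR}$. I would let the leading matrix $\tU^1$ of each source-relay super-block double as a pilot known to the relay, so that the relay estimates $\tH_{SR}$ once per $KT_{SD}$ and then coherently decodes $\tU^k$ in the other $K-1$ sub-blocks. Because the destination only needs $\tH_{SD}\tU^1$ and never $\tU^1$ itself, spending $\tU^1$ as a relay pilot costs the destination nothing; its only effect is that the relay-decodable message is reduced from the full per-super-block count to the fraction $\tfrac{K-1}{K}$ of it. This is precisely the bookkeeping that turns the relay-limited branch of Theorem~\ref{Thm:TSRinfty} into the relay-limited branch here, the relay-decode count acquiring the $\tfrac{K-1}{K}$ multiplier visible in \eqref{eq:Th2D1}, \eqref{eq:TH2D2} and \eqref{eq:TH2D3}.

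For $T_{SD}=T_{RD}$ I would simply sum the source-direct degrees of freedom carried by $\tV_S$ with the relay-forwarded degrees of freedom, which are bounded by $\min\{(K-1)N_S^\ast N_S,\ \text{relay transmit budget}\}$, normalize by $KT_{SD}$, and optimize $n_r$, obtaining \eqref{eq:Th2D1}. For $T_{SD}=K'T_{RD}$ I would keep one $\tU$ per $T_{SD}$ and retrain only $\tH_{RD}^k$ across the $K'$ relay-destination sub-blocks, as in the derivation of \eqref{eq:infty3}; tracking the one-time $N_S$-slot pilot for $\tU$ and the per-sub-block $n_r$-slot pilots for $\tH_{RD}$, together with the $\tfrac{K-1}{K}$ penalty on the relay-decode count, delivers the two branches of \eqref{eq:TH2D3}. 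I expect the destination-decode branch in this case to differ from its $T_{SR}=\infty$ counterpart --- its leading correction appearing as $-\tfrac{N_S}{K'(T_{RD}-n_r)}$ rather than $+$ --- because a finite $T_{SR}$ prevents the relay from freely re-delivering the source's own data stream on top of $\tU$; I would verify this branch directly rather than by analogy.

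The hard part is the mixed configuration $T_{RD}=K'T_{SD}$ with $T_{SR}=KT_{SD}$, where two independent retraining rhythms share one time axis. I would run the scheme over a super-block of length $\mathrm{lcm}(K,K')\,T_{SD}$; the integer-ratio hypothesis $\tfrac{\max(K,K')}{\min(K,K')}\in\mathbb{N}$ is exactly what forces every $\tH_{SR}$-training boundary (spacing $KT_{SD}$) and every $\tH_{RD}$-training boundary (spacing $K'T_{SD}$) to land on a $T_{SD}$ sub-block boundary, so that no coherence interval straddles a pilot and the two pilot patterns interleave without fractional overlap. The genuinely delicate step is mutual non-contamination: the $\tH_{SR}$ pilots occupy the $\tU^1$ slots seen at the relay while the $\tH_{RD}$ pilots are the relay's $\tI_{n_r}$ seen at the destination, and product superposition must keep hiding $\tU^k$ inside $\tH_{SD}\tU^k$ at the destination even while the relay simultaneously emits its own pilots. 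Once these slot assignments are shown to be orthogonal, the remaining degrees-of-freedom accounting and the $n_r$-optimization are routine and yield \eqref{eq:TH2D2}.
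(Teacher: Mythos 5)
Your proposal matches the paper's proof in all essentials: the paper likewise trains $\tH_{SR}$ once per super-block of length $T_{SR}=KT_{SD}$ by sending an unmultiplied pilot block $[\tI_{N_S},\mathbf{0},\tV_S^1]$ in the first sub-block (your ``known $\tU^1$ as pilot'' is the same device with $\tU^1=\tI$), reuses the Theorem~\ref{Thm:TSRinfty} signaling in the remaining $K-1$ sub-blocks so that the relay-decodable DoF acquire exactly the $(K-1)/K$ factor, and treats $T_{RD}=K'T_{SD}$ over a super-block of length $\max(T_{SR},T_{RD})$ (your $\mathrm{lcm}(K,K')\,T_{SD}$, identical under the integer-ratio hypothesis) with source and relay pilots zero-padded into disjoint slots precisely as you require. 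Your caveat about verifying the first branch of \eqref{eq:TH2D3} directly rather than by analogy is well placed: the paper's own accounting there (source total $KN_S(K'T_{RD}-K'n_r-N_S)$ plus relay total $Kn_r(K'T_{RD}-K'n_r-N_S)$, normalized by $KK'T_{RD}$) yields the minus-sign correction directly, just as you anticipated.
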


\begin{proof}
 When $T_{SD} = T_{RD}$, our transmission block has length $KT_{SD}$. This transmit block has $K$ sub-blocks with length $T_{SD}$. During the first sub-block, the source sends the signal
\begin{equation}
\tX_S^{1} = [\tI_{N_S } , \mathbf{0}_{N_S \times n_r} , \tV_S^1],
\end{equation}
where $\tV^{1} \in \mathbb{C}^{N_S \times (T_{SD} - n_r - N_S)}$. The destination estimates the channel $\tH_{SD}$. The relay estimates $\tH_{SR}$ during the first $N_S$ time slots.

In the next $(K - 1)$ sub-blocks, the source sends
\begin{equation}
\tX_S^k = \tU^k [\tI_{N_S } , \mathbf{0}_{N_S \times n_r} , \tV_S^k], \ k=2,\dots,K,
\end{equation}
where $n_r \leq \min \{N_S, N_D-N_S\}$, $\tU^k \in \mathbb{C}^{N_S \times N_S}$ and $\tV_S^k \in \mathbb{C}^{N_S \times (T_{SD} - n_r - N_S)}$.

The received signal at the relay is
\begin{align}
\tY_{R}^{k} = \tH_{SR} \tU^k [\tI_{N_S } , \mathbf{0}_{N_S \times n_r} , \tV_S^k].
\end{align}
The relay knows $\tH_{SR}$ and decodes $\tU^k$ and uses $n_r$ transmit antennas, sending
\begin{align}
\tX_R^k = [\mathbf{0}_{n_r \times N_S} , \tI_{n_r}  , \tV_R^k] \in \mathbb{C}^{n_r \times T_{SD}},
\end{align}
where $\tV_R^k \in \mathbb{C}^{n_r \times (T_{SD} - n_r - N_S)}$. The received signal at the destination is:
\begin{align}
\tY_D^k \twocolAlignMarker = \tH_{SD}^k \tX_S^k + \tH_{RD}^k \tX_R^k + \tW_{D}^k \twocolnewline
	  \twocolAlignMarker = [\tH_{SD}^k \tU^k,\tH_{RD}^k] \Big[\tI_{(N_S + n_r)},
	  \begin{bmatrix}
      \tV_S^k \\
      \tV_R^k
      \end{bmatrix}
	  \Big]
	  + \tW_D^k.
\end{align}
The destination therefore estimates the equivalent channel $[\tH_{SD}^k\tU^k,\tH_{RD}^k]$ during the first $(N_S + n_r)$ time slots, and decodes $\tV_S^k$ and $\tV_R^k$, respectively, provide degrees of freedom $N_S(T - n_r - N_S)$ and $n_r(T - n_r - N_S)$ per transmit block of length $T_{SD}$. For all $k\in\{2,\ldots,K\}$, the degrees of freedom provided via $\tU^k$ are $N_S^\ast N_S$, hence the total degrees of freedom the relay can decode are $(K - 1)N_S^\ast N_S$ per transmit block of length $K T_{SD}$. Therefore, adding up the degrees of freedom during the super block of length $T_{SD}$ and optimizing the number of relay transmit antennas, the end-to-end degrees of freedom are given by \eqref{eq:Th2D1}. This completes the first part of the theorem.

We now consider $T_{RD} = K' T_{SD}$. Recall that in this section we are focusing on fading blocks that are aligned, thus the ratio of any pair of coherence times is an integer. Therefore we have the following two cases for the coherence time configurations.

In the first case, the coherence time $T_{RD} = (K'/K)T_{SR} = K' T_{SD}$, where $(K'/K)$ is an integer. Our transmission block from the source has length $T_{RD}$ and is divided into sub-blocks with length $T_{SR}$. During the first sub-block, from time slot $1$ to $T_{SD}$, the source sends the signal
\begin{equation}
\tX_S^1 = [\tI_{N_S} , \mathbf{0}_{N_S \times n_r}, \tV_S^1] \in \mathbb{C}^{N_S \times T_{SD}},
\end{equation}
where $n_r \leq \min \{N_S, N_D-N_S\}$, $\tV^1 \in \mathbb{C}^{N_S \times (T_{SD} - n_r - N_S)}$.

The relay estimates $\tH_{SR}$ and sends
\begin{align}
\tX_R^1 = [\mathbf{0}_{n_r \times N_S} , \tI_{n_r }  , \tV_R^1] \in \mathbb{C}^{n_r \times T_{SD}},
\end{align}
where $\tV_R^1 \in \mathbb{C}^{n_r \times (T_{SD} - n_r - N_S)}$. The received signal at the destination is
\begin{align}
\tY_D^1 = [\tH_{SD}^1,~\tH_{RD}] \Big[\tI_{(N_S + n_r)}, ~ 
	  \begin{bmatrix}
      \tV_S^1 \\
      \tV_R^1
      \end{bmatrix}
	  \Big]
	  + \tW_D^1.
\end{align}
The destination estimates $[\tH_{SD}^1,~\tH_{RD}]$ and decodes $\tV_S^1$ and $\tV_R^1$. Then every $T_{SD}$ time slots, the source sends the signal
\begin{equation}
\tX_S^k = \tU^{k-1}[\tI_{N_S} , \mathbf{0}_{N_S \times n_r} , \tV_S^k] \in \mathbb{C}^{N_S \times T_{SD}} ,
\end{equation}
where $\tU^{k-1} \in \mathbb{C}^{N_S \times N_S}$, $\tV_S^k \in \mathbb{C}^{N_S \times (T_{SD} - n_r - N_S)}$. The relay decodes $\tU^{k-1}$ and sends
\begin{align}
\label{sig:TH2C1XR1}
\tX_R^k = [\mathbf{0}_{n_r \times N_S} , \tV_R^k] \in \mathbb{C}^{n_r \times T_{SD}},
\end{align}
where $\tV_R^k \in \mathbb{C}^{n_r \times (T_{SD} - N_S)}$. The received signal at the destination is
\begin{align}
\tY_D^k  = [\tH_{SD}^k\tU^k,\tH_{RD}] \Bigg[\tI_{N_S}, 
	  \begin{bmatrix}
      \tV_S^k \\
      \tV_R^k
      \end{bmatrix}
	  \Bigg]
	  + \tW_D^k.
\end{align}
The destination can decode $\tV_S^k$ and  $\tV_R^k$ which respectively provide degrees of freedom $N_S(T - N_S)$ and $n_r(T - N_S)$ and $\tU^k$ can provide degrees of freedom $N_S^\ast N_S$. 

In the remaining sub-block of length $T_{SR}$, the relay-destination channel keeps constant and it has already been estimated by the destination. Therefore, the relay does not need to send pilots. Then every $T_{SD}$ time slots, the transmitted signals at the source are:\footnote{The following expression represents the signaling structure, the information carrying matrices $\tU^k$ and $\tV_S^k$ are independent across different sub-blocks, but for convenience, we use the same notation across different sub-blocks.}
\begin{equation}
\tX_S^1 = [\tI_{N_S} , \tV_S^k],
\end{equation}
\begin{equation}
\tX_S^k = \tU^k[\tI_{N_S} ,  \tV_S^k], 2 \leq k \leq K,
\end{equation}
where $n_r \leq \min \{N_S, N_D-N_S\}$, $\tV_S^k \in \mathbb{C}^{N_S \times (T_{SD} - n_r - N_S)}$. The relay decodes $\tU^k$ and sends \eqref{sig:TH2C1XR1}, with the codeword representing the message of the latest decoded $\tU^k$. In this way, the destination can decode $\tV_S^k$ and $\tV_R^k$. 

The degrees of freedom the relay can decode are $\frac{K'}{K}(K - 1)N_S^\ast N_S$. The source-destination link achieves total degrees of freedom $N_S(K'T_{SD} - n_r -K' N_S)$. The relay-destination link achieves total degrees of freedom $n_r(K'T_{SD} - n_r -K' N_S)$. Adding it up with the degrees of freedom the source-destination link achieves and optimizing the number of relay transmit antennas, it results in the achievable degrees of freedom in \eqref{eq:TH2D2}.

In the second case, the coherence time $T_{SR} = (K/K')T_{RD} = K T_{SD}$, where $(K/K')$ is an integer. The transmission block from the source has length $T_{SR}$ and is divided into sub-blocks with length $T_{RD}$. During the first sub-block, from time slot $1$ to $T_{SD}$, the source sends the signal
\begin{equation}
\tX_S^1 = [\tI_{N_S} , \mathbf{0}_{N_S \times n_r} , \tV_S^1] \in \mathbb{C}^{N_S \times T_{SD}},
\end{equation}
where $n_r \leq \min \{N_S, N_D-N_S\}$, $\tV^1 \in \mathbb{C}^{N_S \times (T_{SD} - n_r - N_S)}$.

The relay estimates $\tH_{SR}$ and sends
\begin{align}
\tX_R^1 = [\mathbf{0}_{n_r \times N_S} , \tI_{n_r }  , \tV_R^1] \in \mathbb{C}^{n_r \times T_{SD}},
\end{align}
where $\tV_R^1 \in \mathbb{C}^{n_r \times (T_{SD} - n_r - N_S)}$. The received signal at the destination is
\begin{align}
\tY_D^1 = [\tH_{SD}^1, \tH_{RD}] \Bigg[\tI_{(N_S + n_r)}, ~ 
	  \begin{bmatrix}
      \tV_S^1 \\
      \tV_R^1
      \end{bmatrix}
	  \Bigg]
	  + \tW_D^1.
\end{align}
The destination estimates $[\tH_{SD}^1,\tH_{RD}]$ and decodes $\tV_S^1$ and $\tV_R^1$. Then every $T_{SD}$ time slots, the source sends the signal
\begin{equation}
\label{sig:TH2C2XS2}
\tX_S^k = \tU^{k-1}[\tI_{N_S}  , \tV_S^k] \in \mathbb{C}^{N_S \times T_{SD}},
\end{equation}
where $\tU^{k-1} \in \mathbb{C}^{N_S \times N_S}$, $\tV_S^k \in \mathbb{C}^{N_S \times (T_{SD} - N_S)}$. The relay decodes $\tU^{k-1}$ and sends
\begin{align}
\label{sig:TH2C2XR2}
\tX_R^k = [\mathbf{0}_{n_r \times N_S} , \tV_R^k] \in \mathbb{C}^{n_r \times T_{SD}},
\end{align}
where $\tV_R^k \in \mathbb{C}^{n_r \times (T_{SD} - N_S)}$. The received signal at the destination is
\begin{align}
\tY_D^k  = [\tH_{SD}^k\tU^k,\tH_{RD}] \Bigg[\tI_{N_S} , 
	  \begin{bmatrix}
      \tV_S^k \\
      \tV_R^k
      \end{bmatrix}
	  \Bigg]
	  + \tW_D^k.
\end{align}

During each remaining sub-block of length $T_{RD}$, from time slot $1$ to $T_{SD}$, the source sends the signal
\begin{equation}
\tX_S^1 = \tU^{1}[\tI_{N_S} , \mathbf{0}_{N_S \times n_r} , \tV_S^1],
\end{equation}
where $n_r \leq \min \{N_S, N_D-N_S\}$, $\tV^1 \in \mathbb{C}^{N_S \times (T_{SD} - n_r - N_S)}$.

The relay decodes $\tU^1$ and sends
\begin{align}
\tX_R^1 = [\mathbf{0}_{n_r \times N_S} , \tI_{n_r }  , \tV_R^1] \in \mathbb{C}^{n_r \times T_{SD}},
\end{align}
where $\tV_R^1 \in \mathbb{C}^{n_r \times (T_{SD} - n_r - N_S)}$. The received signal at the destination is
\begin{align}
\tY_D^1 = [\tH_{SD}^1,\tH_{RD}] \Bigg[\tI_{(N_S + n_r)},  
	  \begin{bmatrix}
      \tV_S^1 \\
      \tV_R^1
      \end{bmatrix}
	  \Bigg]
	  + \tW_D^1.
\end{align}
The destination estimates $[\tH_{SD}^1,\tH_{RD}]$ and decodes $\tV_S^1$ and $\tV_R^1$. Then every $T_{SD}$ time slots, the source and the relay sends the signal with the same structure as \textcolor{black}{\eqref{sig:TH2C2XS2},\eqref{sig:TH2C2XR2}.} 
The received signal at the destination is
\begin{align}
\tY_D^k  = [\tH_{SD}^k\tU^k,\tH_{RD}] \Bigg[\tI_{N_S} , 
	  \begin{bmatrix}
      \tV_S^k \\
      \tV_R^k
      \end{bmatrix}
	  \Bigg]
	  + \tW_D^k.
\end{align}

The degrees of freedom the relay can decode are $(K - 1)N_S^\ast N_S$. The source-destination link achieves total degrees of freedom $N_S(KT_{SD} - \frac{K}{K'}n_r -K N_S)$. The relay-destination link can provide total degrees of freedom $n_r(KT_{SD} - \frac{K}{K'}n_r -K N_S)$. Take the minimum of the degrees of freedom the relay can decode and can transmit. Adding up with the degrees of freedom the source-destination link achieves and optimizing the number of relay transmit antennas, it results in the achievable degrees of freedom in \eqref{eq:TH2D2}. This completes the second part of the theorem.

When $T_{SD} = K' T_{RD}$, our source transmission block has length $T_{SR}$ and is divided into sub-blocks with length $T_{SD}$. In the first sub-block, during time slot 1 to $T_{RD}$, the source sends
\begin{equation}
\tX_S^1 = [\tI_{N_S} , \mathbf{0}_{N_S \times n_r} , \tV_S^1] \in \mathbb{C}^{N_S \times T_{RD}}.
\end{equation}

The relay estimates $\tH_{SR}$ and sends 
\begin{align}
\tX_R^1 = [\mathbf{0}_{n_r \times N_S},\tI_{n_r}  , \tV_R^k] \in \mathbb{C}^{n_r \times T_{RD}}.
\end{align}

During the remaining $(K' - 1)T_{RD}$ time slots, every $T_{RD}$ time slots, the source sends
\begin{equation}
\tX_S^k = [\mathbf{0}_{N_S \times n_r} , \tV_S^k] \in \mathbb{C}^{N_S \times T_{RD}},2 \leq k \leq K',
\end{equation}

and the relay sends
\begin{align}
\tX_R^k = [\tI_{n_r}  , \tV_R^k] \in \mathbb{C}^{n_r \times T_{RD}},
\end{align}
where $n_r \leq \min \{N_S, N_D-N_S\}$, $\tV_R^1 \in \mathbb{C}^{N_S \times (T_{RD} - n_r - N_s)}$ and $\tV_R^k \in \mathbb{C}^{N_S \times (T_{RD} - n_r)}$, $k = 2,\dots,K$. 

During the first $T_{RD}$ time slots, the received signal at the destination is
\begin{align}
\nonumber
\tY_D^1 
	  = [\tH_{SD},\tH_{RD}^1] \Bigg[\tI_{(N_S + n_r)} , 
	  \begin{bmatrix}
      \tV_S^1 \\
      \tX_D^1
      \end{bmatrix}
	  \Bigg]
	  + \tW_D^1.
\end{align}

The destination estimates $[\tH_{SD},\tH_{RD}^1]$ during the first $(N_S + n_r)$ time slots and decodes $\tV_S^1$ and $\tV_R^1$. Then every $T_{RD}$ time slots, the received signal at the destination is
\begin{align}
\nonumber
\tY_D^k 
	  = \Bigg[\tH_{RD}^k, ~ [\tH_{SD} ,\tH_{RD}^k]
	  \begin{bmatrix}
      \tV_S^k \\
      \tV_R^k
      \end{bmatrix}
	  \Bigg]
	  + \tW_D^k.
\end{align}

In the following $(K - 1)$ sub-block of length $T_{SD}$, the source-relay channel $\tH_{SR}$ keeps constant and has already been estimated by the relay. Therefore, we copy the transmission strategy in the proof of Theorem~\ref{Thm:TSRinfty}, when the relay knows the channel $\tH_{SR}$. The source and the relay send the signals  \eqref{sig:TH1C2XS1}, \eqref{sig:TH1C2XR1}, \eqref{sig:TH1C2XS2}, and \eqref{sig:TH1C2XR2}. 

The degrees of freedom the relay can decode are $(K - 1)N_S^\ast N_S$. The source-destination link provides total degrees of freedom $KN_S(K'T_{RD} - K'n_r -N_S)$. The relay-destination link can provide total degrees of freedom $Kn_r(K'T_{RD} - K'n_r -N_S)$ . Adding it up with the degrees of freedom the source-destination link achieves and optimizing the number of the relay transmit antennas, the achievable degrees of freedom in \eqref{eq:TH2D3} are obtained.
This completes the proof.
\end{proof}

\color{black}
\begin{figure}
\begin{minipage}[t]{3.5in}
\centering
\includegraphics[width=\Figwidth]{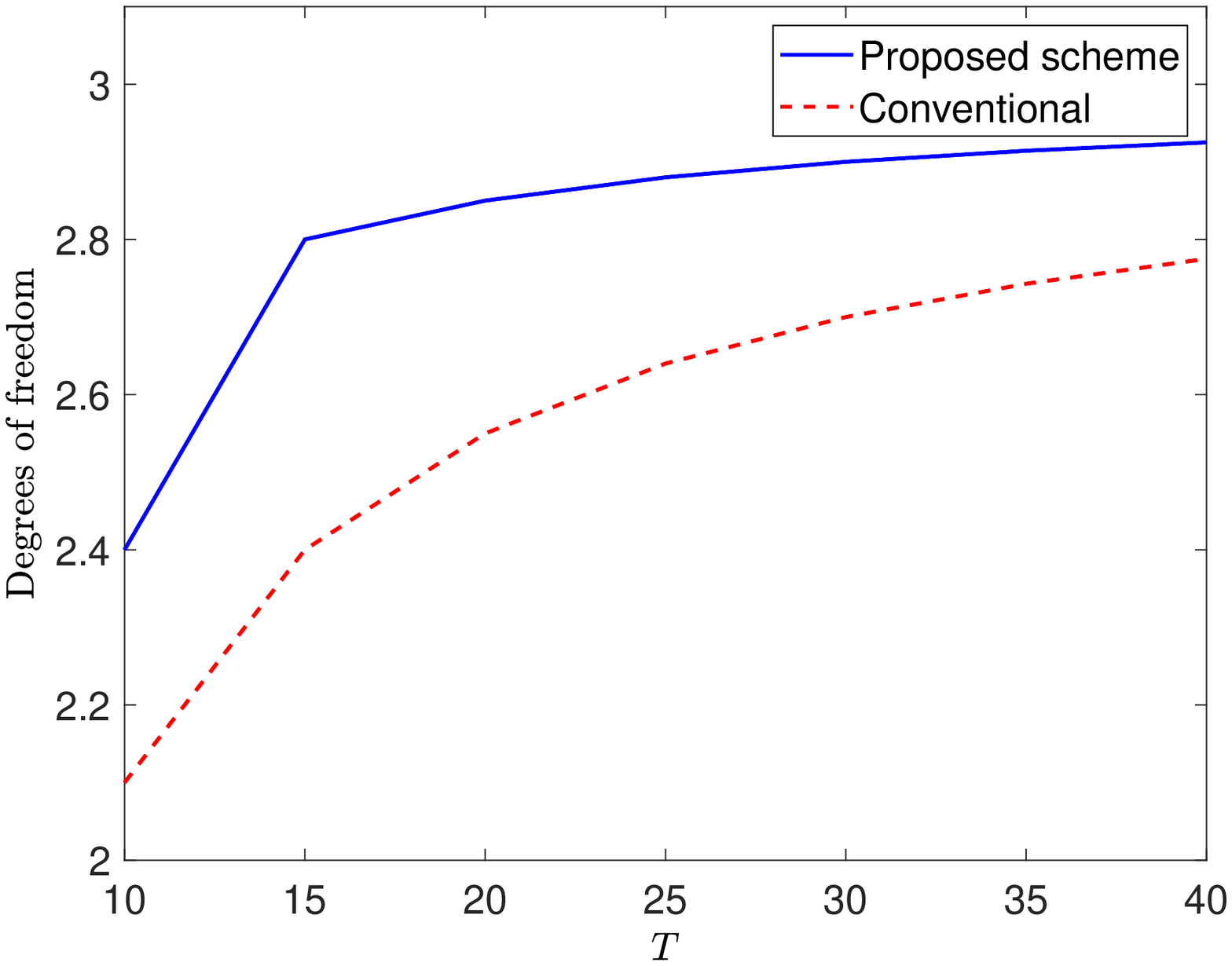}
\caption{DoF for $T_{SR} = \infty$, $T_{RD} = T_{SD} = T$.}
\label{fig:DoF1}
\end{minipage}
\begin{minipage}[t]{3.5in}
\centering
\includegraphics[width=\Figwidth]{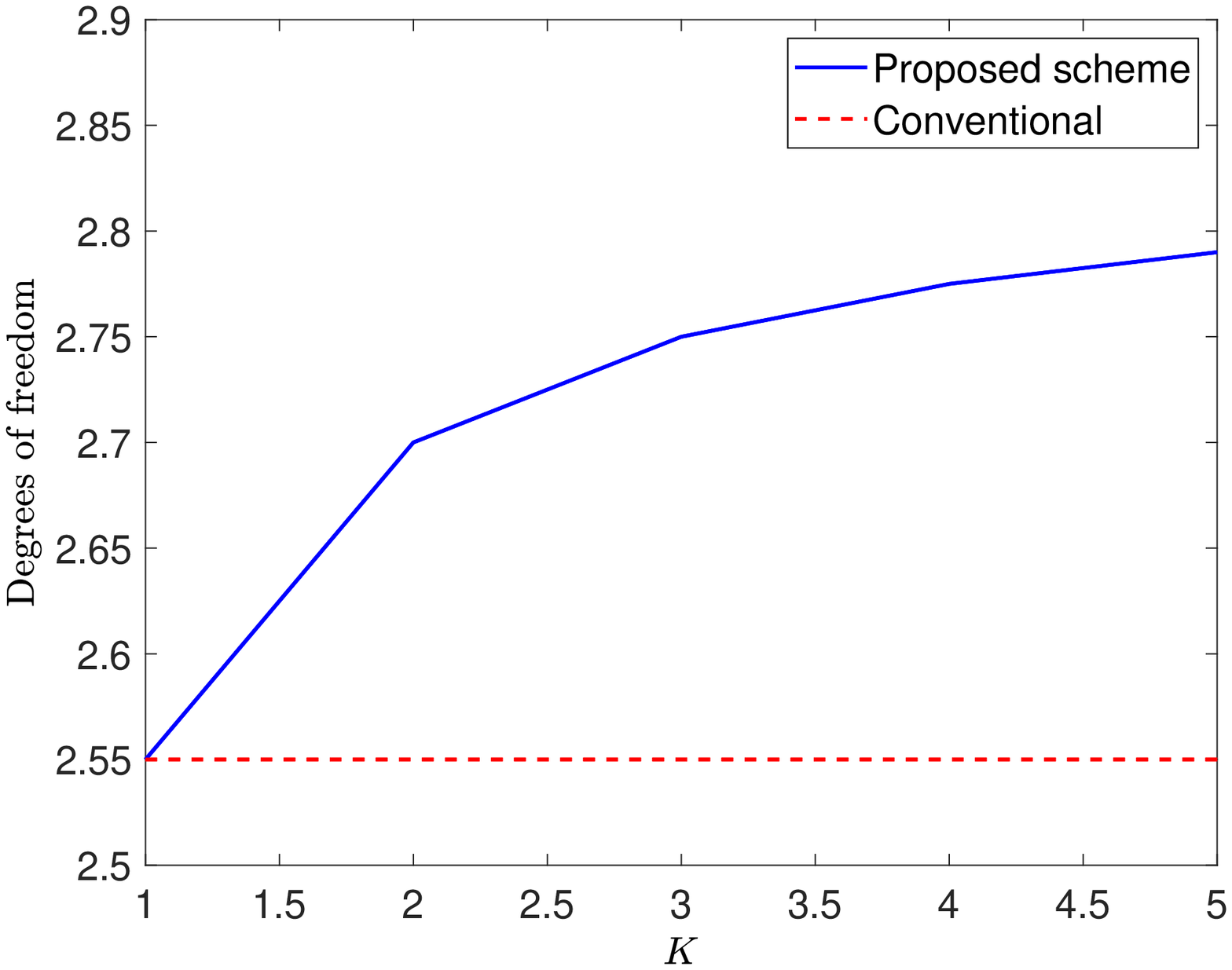}
\caption{DoF for $T_{SR} = KT_{RD} = KT_{SD} = KT$, $T=10$.}
\label{fig:DoF2}
\end{minipage}
\end{figure}
Figure~\ref{fig:DoF1} compares the performance of the proposed scheme with a conventional transmission strategy which designs signals according to the shortest coherence time,  demonstrating the gains in degrees of freedom.  The antenna configuration is $N_S = N_R =3$ and $N_D = 5$. The coherence intervals are $T_{SR} = \infty$, $T_{RD} = T_{SD} = T$. The proposed scheme has a significant gain in degrees of freedom over the conventional transmission. Figure~\ref{fig:DoF2} considers the case where $T_{SR} = KT_{RD} = KT_{SD} = KT $, $T = 10$ for different $K$. When $K = 1$, i.e., all links have identical coherence times, and there is no degrees of freedom gain to be obtained; when $K$ grows, the gain achieved by the proposed scheme increases.

\begin{remark}
The special case $T_{SD}=\infty$, and either $T_{SR}<\infty$ or $T_{RD}<\infty$ (or both), 
$d = \min\{N_S, N_D\}$. This is a corollary of Proposition~\ref{Thm:Equal} for $T_{SD}=T_{SR}=T_{RD}=\infty$, using the fact that reduction of any coherence time does not increase the DoF, and recognizing that DoF $\min\{N_S, N_D\}$ is achievable in these cases by deactivating the relay.
\end{remark}

\section{Achievable DoF with Relay Scheduling}
\label{sec:scheduling}
In this section, a new scheme combining product superposition and relay scheduling is introduced. \newstuff{The relay scheduling is implemented to make a balance between the time slots for estimation in pilot-based scheme and the degrees of freedom of message the relay can decode.} The following theorem highlights the main result of this section. For convenience and compact expression of results, we define:
\begin{align*}
d_1 & \triangleq N_S(T_{SD} - n_r - N_S),\\ 
d_2 & \triangleq n_r(T_{SD} - n_r - N_S),\\ 
d_3 & \triangleq N_S \min\{N_S,N_R\}  , 
\end{align*}

\begin{theorem}
\label{Thm:mod}
In a relay channel with link coherence times $T_{SR} = \infty$, $T_{SD} = T_{RD} = T$ and antenna configuration $N_S  < N_D$, under aligned coherence blocks, 
\begin{itemize}
    \item If $d_2 \leq d_3$, the degrees of freedom $d = \frac{1}{T_{SD}} \max_{n_r} (d_1 + d_2)$ are achievable. 
    \item If $d_2 > d_3$, the following degrees of freedom are achievable,
\end{itemize}
\begin{align}
\label{dof:mod}
d & = \frac{1}{T_{SD}} \max_{n_r} (\frac{d_2 - d_3}{d_2} N_S(T_{SD} - N_S) + \frac{d_3}{d_2}(d_1 + d_2)).
\end{align}
\end{theorem}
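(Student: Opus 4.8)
The plan is to build directly on the achievability scheme of Theorem~\ref{Thm:TSRinfty} for the case $T_{SD}=T_{RD}$ and to show that, when the relay-forward capacity exceeds the relay-decode capacity, time-sharing the relay activation recovers the forwarding and pilot resources that a fully-active relay would waste. Recall that per coherence block of length $T_{SD}$ the product-superposition scheme yields three quantities: the direct stream $\tV_S$ carries $d_1=N_S(T_{SD}-n_r-N_S)$ DoF; the relay-forwarding stream $\tV_R$ offers capacity $d_2=n_r(T_{SD}-n_r-N_S)$ DoF; and the relay, knowing $\tH_{SR}$ because $T_{SR}=\infty$, decodes $\tU$ from $\tY_R'=\tH_{SR}\tU+\tW_R'$ for $d_3=N_S N_S^\ast$ DoF of forwardable data. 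Since the relay cannot forward more than it decodes, the end-to-end relay contribution in a permanently-active scheme is $\min\{d_2,d_3\}$, which is the content of Theorem~\ref{Thm:TSRinfty}. Only achievability is claimed, so no converse is required.

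First I would dispose of the case $d_2\le d_3$. Here the relay-destination link is the bottleneck and the relay always has enough decoded data on hand, so keeping the relay active in every block attains $d_1+d_2$ per block; dividing by $T_{SD}$ and optimizing over $n_r$ gives $d=\frac{1}{T_{SD}}\max_{n_r}(d_1+d_2)$, exactly as asserted.

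The substance is the case $d_2>d_3$, where a fully-active relay leaves $d_2-d_3$ of forwarding capacity unused while still spending $n_r$ pilot slots at the destination in every block. I would introduce an activation fraction $\lambda\in[0,1]$ and schedule as follows. In \emph{every} block the source transmits the product-superposition signal, so that the relay decodes $\tU$ and harvests $d_3$ DoF of forwardable data per block irrespective of whether it transmits. In a fraction $\lambda$ of the blocks the relay is active: it sends $[\mathbf{0}_{n_r\times N_S},\tI_{n_r},\tV_R]$, the destination estimates the augmented equivalent channel $[\tH_{SD}\tU,\tH_{RD}]$ over $N_S+n_r$ slots, and decodes $\tV_S$ and $\tV_R$ for $d_1+d_2$ DoF. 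In the remaining fraction $1-\lambda$ the relay is silent; the source may drop the pilot gap and send $\tU[\tI_{N_S},\tV_S]$, the superposition $\tU$ is absorbed into the effective channel $\tH_{SD}\tU$ which the destination estimates over only $N_S$ slots, and the direct link alone delivers $N_S(T_{SD}-N_S)$ DoF. A flow-balance argument then governs the end-to-end rate: the relay decodes $d_3$ DoF per block but forwards only in active blocks, so the forwarded DoF per block is $\min\{d_3,\lambda d_2\}$, while the buffering delay between decoding and forwarding amortizes to zero over a long block horizon and is immaterial to the DoF.

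Collecting terms, the per-block objective is $f(\lambda)=\lambda d_1+(1-\lambda)N_S(T_{SD}-N_S)+\min\{d_3,\lambda d_2\}$, which I would optimize over $\lambda$. Using $d_1-N_S(T_{SD}-N_S)=-N_S n_r$, one finds that for $\lambda\le d_3/d_2$ the slope is $d_2-N_S n_r=n_r(T_{SD}-n_r-2N_S)>0$ (positive since $T_{SD}-n_r-2N_S\ge N_D-N_S>0$ by $T_{SD}\ge 2N_D$ and $n_r\le N_D-N_S$), while for $\lambda\ge d_3/d_2$ the slope is $-N_S n_r\le 0$; hence the optimum is the flow-balance point $\lambda^\star=d_3/d_2$, at which $\lambda^\star d_2=d_3$. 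Substituting $\lambda^\star$ gives the per-block value $\frac{d_2-d_3}{d_2}N_S(T_{SD}-N_S)+\frac{d_3}{d_2}(d_1+d_2)$, and dividing by $T_{SD}$ and maximizing over $n_r$ reproduces \eqref{dof:mod}. The main obstacle I anticipate is the bookkeeping of the decode-forward flow balance: one must verify that scheduling at $\lambda^\star=d_3/d_2$ keeps the relay buffer stable (input $d_3$ per block matching average output $\lambda^\star d_2=d_3$) and that transmitting $\tU$ during silent blocks genuinely costs no extra destination pilots because it merely rotates the direct channel into $\tH_{SD}\tU$. These are precisely the points at which a careless accounting would over- or under-count the training overhead and spoil the convex-combination structure of \eqref{dof:mod}.
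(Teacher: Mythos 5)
Your proposal is correct and follows essentially the same route as the paper's proof: time-sharing between a relay-silent phase (source still using product superposition, so the destination gets the full point-to-point DoF $N_S(T_{SD}-N_S)$ while the relay banks $d_3$ per block) and a relay-active phase delivering $d_1+d_2$, with the buffered data amortized across phases. Your only addition is the explicit slope analysis of $f(\lambda)$ showing the flow-balance point $\lambda^\star=d_3/d_2$ is optimal within this family — the paper simply fixes the phase lengths at $d_2-d_3$ and $d_3$ blocks, which is the same split.
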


\begin{proof}
If $d_2 \leq d_3$, the achievable degrees of freedom follow Theorem~\ref{Thm:TSRinfty}. When $d_2 > d_3$, the transmit scheme with relay scheduling has two phases, each of them lasting an integer multiple of the coherence interval $T$. In both phases, product superposition is used at the source, but the relay action is different in the two phases. We transmit for $d_2-d_3$ coherence intervals in Phase~1, followed by transmitting $d_3$ coherence intervals in Phase~2.

During Phase~1, the relay transmission is deactivated, but the source continues to transmit via product superposition. In this phase, in each coherence interval of length $T_{SD}$, the source delivers to the destination data rates corresponding to its point-to-point degrees of freedom bound, which is $N_S(T_{SD} - N_s)$, while delivering {\em additional} data to the relay with degrees of freedom $d_3$. We transmit in Phase~1 for $d_2 - d_3$ coherence intervals, therefore, the normalized (per-symbol) average degrees of freedom contribution of this phase is $\frac{d_2 - d_3}{d_2} \frac{1}{T_{SD}} N_S(T_{SD} - N_S)$.

During Phase 2, the relay is activated and the source sends the product superposition signal
\begin{equation}
\tX_S = \tU [\tI_{N_S} , \mathbf{0}_{N_S \times n_r} , \tV_S],
\end{equation}
where $n_r \leq \min \{N_S, N_D-N_S\}$, $\tU \in \mathbb{C}^{N_S \times N_S}$ and $\tV_S \in \mathbb{C}^{N_S \times (T_{SD} - n_r - N_S)}$.

The relay knows $\tH_{SR}$ and decodes $\tU$. The relay uses $n_r$ antennas for transmission, sending
\begin{align}
\tX_R = [\mathbf{0}_{n_r \times N_S} , \tI_{n_r}  , \tV_R] \in \mathbb{C}^{n_r \times T_{SD}},
\end{align}
where $\tV_R \in \mathbb{C}^{n_r \times (T_{SD} - n_r - N_S)}$.

The destination estimates the equivalent channel $[\tH_{SD}\tU,~\tH_{RD}]$ during the first $(N_S + n_r)$ time slots and then decodes its messages. Destination receives: $\tV_S$ from the source and $\tV_R$ from the relay, providing degrees of freedom $d_1$ and $d_2$, respectively. Phase 2 consists of $d_3$ coherence intervals; further, recall that the relay has stored data available from Phase~1 in addition to the data it is receiving during Phase~2. Therefore, the relay can send data with degrees of freedom $d_2$ to the destination. Hence during phase~2, the normalized per-symbol degrees of freedom are $\frac{1}{T_{SD}} \frac{d_3}{d_2}(d_1 + d_2)$. 

Adding the degrees of freedom achieved in Phase~1 and Phase~2 and optimizing the number of relay transmit antennas to be activated produces~\eqref{dof:mod}. This completes the proof.
\end{proof}

\begin{remark}
For comparison, we also mention the degrees of freedom {\em without} relay scheduling. For a relay with the following setup $T_{SR} = \infty$, $T_{SD} = T_{SR} = T$ and $N_S  < N_D$. From Theorem~\ref{Thm:TSRinfty}, the following degrees of freedom are achievable:
\begin{align*}
d = \frac{1}{T_{SD}} \max_{n_r} \min \{d_1 + d_2,d_1 + d_3\}.
\end{align*}
\end{remark}

\begin{figure*}
\centering
\includegraphics[width=4.5in]{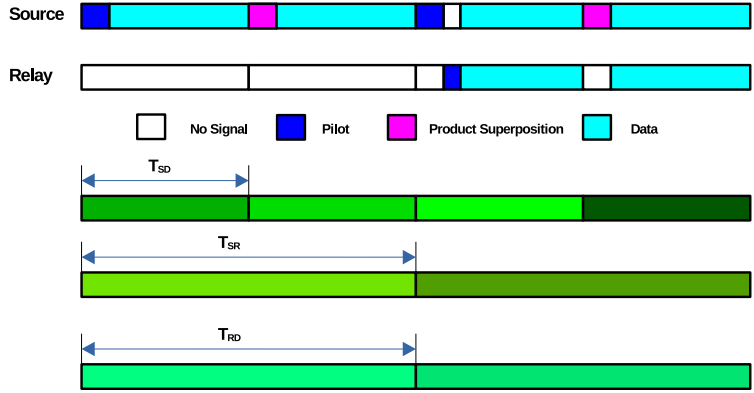}
\caption{Signal Structure with Relay Scheduling.}
\label{fig:Sig_Mod}
\end{figure*}

Figure~\ref{fig:Sig_Mod} shows the signaling structure of the proposed scheme combining product superposition and relay scheduling. Figure~\ref{fig:DoF_mod} shows the comparison between the achievable degrees of freedom of product superposition alone and with relay scheduling when $N_S = 3, N_D = 5$ for different $T$. 
\begin{figure}
    \centering
    \includegraphics[width=\Figwidth]{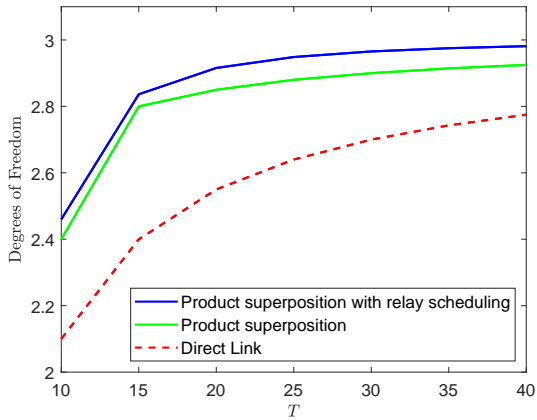}
    \caption{Achievable DoF in Theorem~\ref{Thm:mod}.}
    \label{fig:DoF_mod}
\end{figure}

\section{General Coherence Times}
\label{sec:general}
\subsection{Staggered Coherence Blocks}
We now consider the scenario when the coherence blocks are not perfectly aligned. To build intuition and motivation for the proposed approach, we begin with an unaligned counterpart to the toy example in Section~\ref{sec:toy_unequal}. Then we generalize the result to arbitrary coherence times.

The unaligned toy example is as follows: the source and relay are equipped with 2 antennas and the destination is equipped with 3 antennas. The coherence times of the three links are as follows: $T_{SR} = \infty$, i.e., the source-relay channel is static, therefore the cost of training over this link is amortized over a large number of samples and we can assume the relay knows $\tH_{SR}$. Furthermore we assume $T_{SD} = T_{RD} = 8$. The coherence blocks of the channel $\tH_{RD}$ starts from the 5th time slot of $\tH_{SR}$.

The source uses product superposition, sending
\begin{equation}
\tX_S = \tU [\tI_{2} , \mathbf{0}_{2 \times 1} , \tV_S],
\end{equation}
where $\tU \in \mathbb{C}^{2 \times 2}$ and $\tV_S \in \mathbb{C}^{2 \times 5}$.

At the relay, the received signal is
\begin{align}
\tY_R \twocolAlignMarker = \tH_{SR} \tX_S + \tW_{R} \twocolnewline
      \twocolAlignMarker = \tH_{SR} \tU [\tI_{2} , \mathbf{0}_{2 \times 1} , \tV_S] + \tW_{R}.
\end{align}

The received signal at the first two time slots is
\begin{align}
\tY_R^{\prime}  =  \tH_{SR} \tU  +\tW_{R}^{\prime}.
\end{align}
The relay knows $\tH_{SR}$ and decodes $\tU$. Assume the signal decoded by the relay in the previous block is $\tU^{\prime}$ and the two rows of $\tU^{\prime}$ are $\tu_1^{\prime}, \tu_2^{\prime} \in \mathbb{C}^{1 \times 2}$.

The relay uses one antenna for transmission and sends
\begin{align}
\tX_R = [\mathbf{0}_{1 \times 2} ,1  , \tu_1^{\prime} , \tu_2^{\prime} , 0] \in \mathbb{C}^{1 \times 8}.
\end{align}

Now in one coherence block of $\tH_{SD}$, because of the unaligned blocks of $\tH_{RD}$, the received signal at the destination will experience two realizations of $\tH_{RD}$ in the first 4 time slots,
\begin{align}
\tY_D 
	   = [\tH_{SD}\tU,\tH_{RD1}] \Bigg[\tI_{3} ,
      \begin{bmatrix}
      \tV_S^1 \\
      \tu_1^{\prime}(1)
      \end{bmatrix}
	  \Bigg]
	  + \tW_D.
\end{align}
The destination estimates the equivalent channel $[\tH_{SD} \tU,\tH_{RD1}]$ in the first three time slots and decodes $\tV_S, \tu_1^{\prime}(1)$. 

In the next 4 time slots, the received signal is:
\begin{align}
\tY_D 
	  & = [\tH_{SD} \tU ,\tH_{RD2}]  
	  \begin{bmatrix}
	 \tV_S \\
	  \tu_1^{\prime}(2) , \tu_2^{\prime} , 0
	  \end{bmatrix}
	  + \tW_D.
\end{align}
The first part of the equivalent channel $\tH_{SD} \tU$ is already estimated. The second part $\tH_{RD2}$ will be estimated in the next transmit block. Therefore, the destination decodes $\tV_S^2, \tu_1^{\prime}(2)$ and $\tu_2^{\prime}$. This shows that when the coherence blocks from the source and relay to the destination are unaligned, the proposed scheme can still be used. The destination achieves the same degrees of freedom $d = (2 \times 5 + 2 \times 1 \times 2)/8 = 1.75$ when the coherence blocks are aligned and with the same coherence times. Recall that for a conventional technique that trains all links according to the shortest coherence interval, the degrees of freedom are $d^{\prime} = 2 \times (8 - 2)/8 = 1.5$. 

A similar reasoning can be used to verify that when the coherence blocks from the source to the relay and the source to destination are unaligned, the offset of these coherence blocks will not affect the achievability of our proposed scheme.

\subsection{Arbitrary Coherence Times}
The following theorem states the achievable degrees of freedom with arbitrary coherence times and Figure~\ref{fig:Sig_Stagger} illustrates the signaling structure for the achievable scheme.
\begin{theorem}
\label{thm:DoF_anyTc}
In a relay channel with link coherence times satisfying $T_{SR} > T_{SD}$, $T_{RD} > T_{SD}$.  and antenna configuration $N_S, N_R < N_D$, the following degrees of freedom are achievable:
\begin{align}
d  = & \frac{1}{T_{SR}T_{SD}T_{RD}} \nonumber\\
& \times \max_{n_r} \{N_S(T_{SR}T_{SD}T_{RD} - N_ST_{SR}T_{RD} - n_rT_{SR}T_{SD})  \nonumber\\ 
 & + \min \{N_S^{\ast}N_S(T_{SR}T_{RD} - T_{SD}T_{RD}), \nonumber\\
 & \quad \quad \quad n_r(T_{SR}T_{SD}T_{RD} - N_ST_{SR}T_{RD} - n_rT_{SR}T_{SD}) \}\}.
\end{align}
where  $N_S^{\ast}  \triangleq \min\{N_S,N_R\}$.
\end{theorem}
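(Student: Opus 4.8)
The plan is to prove achievability by transmitting over a \emph{super-block} of length $L = T_{SR}T_{SD}T_{RD}$, which is simultaneously divisible by each of the three coherence times, so an integer number of $\tH_{SR}$-, $\tH_{SD}$-, and $\tH_{RD}$-blocks fits inside $L$ and the entire pattern of block-boundary crossings repeats with period $L$ regardless of whether the coherence-time ratios are integers. Because the coherence lengths are integers, the relative offsets between the three tilings are fixed and periodic over $L$, so a per-super-block DoF accounting is exact. Within the super-block I reuse the product-superposition signaling of Theorems~\ref{Thm:TSRinfty} and~\ref{thm:int}: the source transmits $\tX_S = \tU[\tI_{N_S}, \mathbf{0}_{N_S \times n_r}, \tV_S]$ on each $\tH_{SD}$-block, the relay (which estimates $\tH_{SR}$ with $N_S$ pilots once per $\tH_{SR}$-block) decodes $\tU$ and forwards it in block-Markov fashion through its $n_r$ active antennas, and the destination estimates the $(N_S+n_r)$-column equivalent channel $[\tH_{SD}\tU, \tH_{RD}]$ before decoding $\tV_S$ and $\tV_R$. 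Feasibility of this estimation is guaranteed by $N_S, N_R < N_D$ together with $n_r \le \min\{N_S, N_D - N_S\}$.

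The core of the argument is a slot-counting bookkeeping. First I would show that the pilot overheads are \emph{additive}: each $\tH_{SD}$-block costs $N_S$ slots to re-estimate $\tH_{SD}\tU$ and each $\tH_{RD}$-block costs $n_r$ slots to re-estimate $\tH_{RD}$, and these two overheads do not collide even when an $\tH_{RD}$-boundary lands in the interior of an $\tH_{SD}$-block. This is exactly the phenomenon exhibited by the staggered toy example: since the product $\tH_{SD}\tU$ is constant over the whole $\tH_{SD}$-block, a mid-block change of $\tH_{RD}$ forces only a fresh $n_r$-dimensional estimate of $\tH_{RD}$ while the already-known $\tH_{SD}\tU$ component is retained, so the destination still peels off both signal parts. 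Hence the number of data-carrying slots per super-block is $L - N_S(L/T_{SD}) - n_r(L/T_{RD}) = T_{SR}T_{SD}T_{RD} - N_S T_{SR}T_{RD} - n_r T_{SR}T_{SD}$, and multiplying by the $N_S$ source streams and by the $n_r$ relay streams gives, respectively, the source-destination term and the relay-transmit argument of the minimum.

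Next I would bound what the relay can actually \emph{decode}. Out of the $L/T_{SD} = T_{SR}T_{RD}$ source sub-blocks in the super-block, the relay cannot extract a fresh $\tU$ in the one source sub-block per $\tH_{SR}$-block that overlaps its $\tH_{SR}$ training, leaving $T_{SR}T_{RD} - T_{SD}T_{RD}$ decodable copies of $\tU$; since each $\tU$ carries $N_S^\ast N_S$ decodable degrees of freedom through the rank-$N_S^\ast$ source-relay channel as in Theorem~\ref{Thm:TSRinfty}, the relay decode budget per super-block is $N_S^\ast N_S(T_{SR}T_{RD} - T_{SD}T_{RD})$, the first argument of the minimum. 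The relay's end-to-end contribution is the smaller of its decode budget and its transmit budget, the total is the source-destination term plus this minimum, and optimizing over $n_r$ and normalizing by $L$ yields the claimed expression.

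I expect the main obstacle to be the rigorous treatment of the staggered overlaps: verifying that no extra slots are lost when block boundaries of different links nearly coincide or partition one another awkwardly, that the $N_S$-slot $\tH_{SR}$ training costs exactly one source sub-block per $\tH_{SR}$-block rather than straddling two, and that block-Markov forwarding stays consistent across the shifting sub-block structure. The clean additive accounting $1 - N_S/T_{SD} - n_r/T_{RD}$ must be shown to survive every alignment, which is precisely where the product-superposition invariance, with one factor ``disappearing'' into the effective gain of the unchanged link, does the real work; once that invariance is established in the staggered setting, the remainder is arithmetic on the super-block counts.
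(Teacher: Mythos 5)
Your proposal is correct and takes essentially the same route as the paper's proof: the same super-interval of length $T_{SR}T_{SD}T_{RD}$, the same additive pilot accounting with non-colliding source and relay pilot slots ($N_S$ per $\tH_{SD}$-block, $n_r$ per $\tH_{RD}$-block), the same product-superposition reuse of the $T_{SR}T_{RD}-T_{SD}T_{RD}$ pilot sequences not needed by the relay to obtain the decode budget $N_S^{\ast}N_S(T_{SR}T_{RD}-T_{SD}T_{RD})$, and the same minimum-of-decode-and-transmit bookkeeping normalized by the super-interval and optimized over $n_r$. Your handling of staggered boundaries via the invariance of the effective channel $\tH_{SD}\tU$ across mid-block changes of $\tH_{RD}$ is precisely the mechanism the paper establishes with its unaligned toy example in Section~\ref{sec:general}.
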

\begin{proof}
Design the pilot-based achievable scheme in the following manner:
\begin{itemize}
\item On the multiple-access side, pilots sent from the relay and the source will be allocated in different time slots, such that they will not interfere with each other. In addition, during these time slots no data is sent, avoiding pilot contamination.
\item On the broadcast side, the source-relay link needs fewer pilots than the source-destination. Thus, product superposition enables transmission of additional data to the relay. 
\end{itemize}

\begin{figure*}
\centering
\includegraphics[width=4.5in]{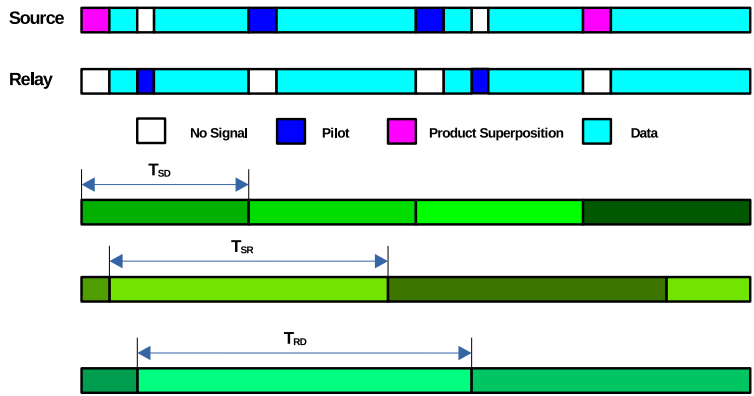}
\caption{Signaling Structure for Arbitrary Coherence Times.}
\label{fig:Sig_Stagger}
\end{figure*}

In the following, we consider a super-interval of length $T_{SR}T_{RD}T_{SD}$, after which the coherence intervals will come back to their original alignment.  The achievable degrees of freedom are calculated as follows:

In each source-destination coherence interval $T_{SD}$, $N_S$ pilot symbols are transmitted. We call the pilot symbols in each coherence block a {\em pilot sequence}.

Therefore, for source-destination link, we repeat the length-$N_S$ pilot sequence $T_{SR}T_{RD}$ times over the length-$T_{SR}T_{RD}T_{SD}$ super-interval. Having coherence time $T_{SR}T_{RD}$, the relay needs $T_{SD} T_{RD}$ pilot sequences. Hence, product superposition can be applied during $T_{SR}T_{RD} - T_{SD}T_{RD}$ pilot sequences of length $N_S$ to send data to the relay. Data with $N_S^{\ast}$ degrees of freedom per symbol can be sent.

Over each super-interval, the relay-destination link needs $T_{SR}T_{SD}$ pilot sequences of length $n_r$. The pilots slots will be non-overlapping with pilots transmitted from the source terminal.

In each super-interval, the source and the relay each have $(T_{SR}T_{SD}T_{RD} - N_ST_{SR}T_{RD} - n_rT_{SR}T_{SD})$ time slots available for sending data. The source has $N_S$ degrees of freedom available per transmission, and the relay $n_r$ degrees of freedom per transmission.

The relay can decode at most $N_S^{\ast}N_S(T_{SR}T_{RD} - T_{SD}T_{RD})$ degrees of freedom, therefore, it provides $\min \{N_S^{\ast}N_S(T_{SR}T_{RD} - T_{SD}T_{RD}), n_r(T_{SR}T_{SD}T_{RD} - N_ST_{SR}T_{RD} - n_rT_{SR}T_{SD}) \}$ degrees of freedom, the minimum of the degrees of freedom the relay can receive and can transmit.

We can now add the degrees of freedom by the source transmission (subject to relay constraints) with the degrees of freedom provided by the relay transmission, and optimize the number of relay antennas to be activated. This concludes the proof.
\end{proof}

The following corollary shows the achievable degrees of freedom when using relay scheduling  with arbitrary coherence times.
\begin{corollary}
Define the following notation:
\begin{align*}
d_1 & \triangleq N_S(1 - \frac{N_S}{T_{SD}} - \frac{n_r}{T_{RD}}),\\ 
d_2 & \triangleq n_r(1 - \frac{N_S}{T_{SD}} - \frac{n_r}{T_{RD}}),\\ 
d_3 & \triangleq N_S^\ast N_S(\frac{1}{T_{SD}} - \frac{1}{T_{RD}}).
\end{align*}

In a relay with coherence diversity, 
\begin{itemize}
    \item If $d_2 \leq d_3$, the degrees of freedom $d =  \max_{n_r} (d_1 + d_2)$ are achievable. 
    \item If $d_2 > d_3$, the following degrees of freedom are achievable. 
\end{itemize}
\begin{align}
\label{dof:anyT_sch}
d & = \max_{n_r} (\frac{d_2 - d_3}{d_2} N_S(1 - \frac{N_S}{T_{SD}}) + \frac{d_3}{d_2}(d_1 + d_2)),
\end{align}
\end{corollary}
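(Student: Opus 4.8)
The plan is to treat this corollary as the relay-scheduling refinement of Theorem~\ref{thm:DoF_anyTc}, exactly as Theorem~\ref{Thm:mod} refines the unscheduled scheme for aligned blocks. First I would identify the three quantities with the per-symbol rates produced by the staggered achievable scheme of Theorem~\ref{thm:DoF_anyTc}: $d_1$ is the source-to-destination data rate (the $N_S$ spatial streams scaled by the fraction $1 - \frac{N_S}{T_{SD}} - \frac{n_r}{T_{RD}}$ of slots left after source and relay pilots), $d_2$ is the relay-to-destination data rate carried on the $n_r$ activated antennas over the same fraction of slots, and $d_3$ is the per-symbol rate at which the relay can decode fresh source data through product superposition. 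The guiding observation, inherited from Theorem~\ref{Thm:mod}, is that the relay forwards useful data only up to the minimum of what it can decode ($d_3$) and what it can transmit ($d_2$); when $d_2 > d_3$ the relay-destination link is under-utilized, and scheduling recovers the slack.

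For the easy case $d_2 \le d_3$, I would argue that the relay decodes at least as fast as it can forward, so the forwarding side is the bottleneck and the unscheduled scheme of Theorem~\ref{thm:DoF_anyTc} already delivers the full $d_1 + d_2$ per symbol (the minimum in that theorem selecting the relay-transmit term). Optimizing over the number of activated relay antennas $n_r$ gives $d = \max_{n_r}(d_1 + d_2)$, and no scheduling is needed.

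The substantive case is $d_2 > d_3$, handled by a two-phase schedule spanning many super-intervals of length $T_{SR}T_{SD}T_{RD}$. In Phase~1 the relay transmission is switched off; the source keeps running product superposition, so it attains its point-to-point rate $N_S(1 - \frac{N_S}{T_{SD}})$ to the destination (no relay pilots are then required) while still feeding the relay fresh data at rate $d_3$, which the relay buffers. In Phase~2 the relay is activated and the staggered product-superposition scheme of Theorem~\ref{thm:DoF_anyTc} runs, delivering $d_1 + d_2$ per symbol with the relay forwarding buffered-plus-current data and still decoding at rate $d_3$. Allocating the time fractions $\frac{d_2 - d_3}{d_2}$ to Phase~1 and $\frac{d_3}{d_2}$ to Phase~2, the time-averaged DoF becomes $\frac{d_2 - d_3}{d_2}N_S(1 - \frac{N_S}{T_{SD}}) + \frac{d_3}{d_2}(d_1 + d_2)$, and maximizing over $n_r$ yields \eqref{dof:anyT_sch}.

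The crux, and the step I expect to be the main obstacle, is verifying that this split keeps the relay buffer balanced, so that in Phase~2 the relay genuinely has enough stored data to sustain rate $d_2$. With the chosen fractions the relay decodes $\frac{d_2 - d_3}{d_2}d_3 + \frac{d_3}{d_2}d_3 = d_3$ per symbol on average and forwards $\frac{d_3}{d_2}d_2 = d_3$ per symbol on average, so supply matches demand. Making this rigorous needs the usual asymptotic bookkeeping: phase lengths are rounded to integer multiples of the super-interval, and the rounding error together with the startup transient (the relay buffer is initially empty) is amortized to zero over a growing number of super-intervals. I would also confirm that the staggering of the coherence blocks does not disrupt the relay's decoding of $\tU$ or the destination's estimation of the equivalent channel at the phase boundaries, which follows from the same reasoning used for the staggered example preceding Theorem~\ref{thm:DoF_anyTc}.
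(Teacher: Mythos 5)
Your proposal follows essentially the same route as the paper's own proof: the identical two-phase relay-scheduling argument with Phase~1 (relay silent, source at its point-to-point rate $N_S(1-\frac{N_S}{T_{SD}})$ while feeding the relay at rate $d_3$) and Phase~2 (the scheme of Theorem~\ref{thm:DoF_anyTc} delivering $d_1+d_2$), weighted by the time fractions $\frac{d_2-d_3}{d_2}$ and $\frac{d_3}{d_2}$ and maximized over $n_r$. Your explicit buffer-balance check (decoded $d_3$ per symbol equals forwarded $\frac{d_3}{d_2}d_2$ per symbol) and the remark on amortizing rounding and start-up transients are details the paper leaves implicit, but they do not change the argument.
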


\begin{proof}
If $d_2 \leq d_3$, the achievable degrees of freedom follows Theorem~\ref{thm:DoF_anyTc}. When $d_2 > d_3$. the transmit scheme with relay scheduling has two phases. In both phases, product superposition is used at the source, but the relay action is different in the two phases, as described in the sequel. We propose to transmit for $d_2-d_3$ coherence intervals in Phase~1, followed by transmitting $d_3$ coherence intervals in Phase~2.

During Phase 1, the relay transmission is deactivated but the source continues to transmit via product superposition. In this phase, the source delivers to the destination data rates corresponding to its point-to-point degrees of freedom bound, following the result in \cite{Fadel_disparity} which is $N_S(T_{SD} - N_S)$, while delivering {\em additional} data to the relay with degrees of freedom $d_3$. We transmit in Phase~1 for $d_2 - d_3$ coherence intervals, therefore, the normalized (per-symbol) average degrees of freedom contribution of this phase is $\frac{d_2 - d_3}{d_2} \frac{1}{T_{SD}} N_S(T_{SD} - N_S)$.

During Phase 2, following the strategy from the proof of Theorem~\ref{Thm:mod}, the relay has stored data available from Phase~1 in addition to the data it is receiving in Phase~2. Therefore, the relay can send data with degrees of freedom $d_2$ to the destination. Hence during phase~2, the normalized per-symbol degrees of freedom are $\frac{d_3}{d_2}(d_1 + d_2)$. 

Adding the degrees of freedom achieved in Phase~1 and Phase~2 and optimizing the number of relay transmit antennas to be activated produces~\eqref{dof:anyT_sch}. This completes the proof.
\end{proof}

\section{Multiple Relays in Parallel}
\label{sec:multiple}
\begin{figure}
\centering
\includegraphics[width=\Figwidth]{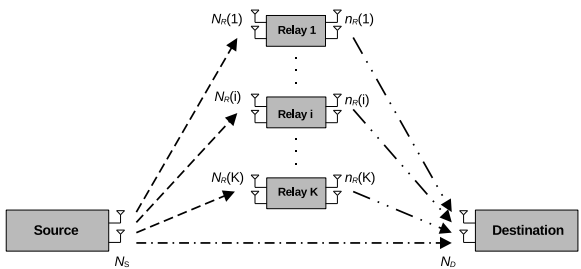}
\caption{MIMO Channel with Multiple Parallel Relays.}
\label{fig:RC_Strc_Mul}
\end{figure}

This section studies the MIMO relay channel with $K$ full-duplex relays, under coherence diversity . The source and destination are equipped with $N_S$ and $N_D$ antennas, respectively. Relay~$k$ has $N_R(k)$ receive antennas and uses $n_R(k) \leq N_R(k)$ antennas for transmission. Figure~\ref{fig:RC_Strc_Mul} shows the structure of the system. The received signals at the relays and the destination are:
\begin{align}
\label{Eq:sys_k1}
\ty_R(k) & = \tH_{SR}(k) \tx_S + \tw_{R}(k), \qquad k=1,\ldots, K\\
\label{Eq:sys_k2}
\ty_D & = \tH_{SD} \tx_S + \sum_{k=1,\dots,K}\tH_{RD}(k) \tx_R(k) + \tw_{D},
\end{align}
where $\tx_S$ and $\tx_R(k)$ are signals transmitted from the source and Relay~$k$. $\tw_R$ and $\tw_D$ are i.i.d. zero-mean Gaussian noise and $\tH_{SR}(k)$, $\tH_{RD}(k)$ and $\tH_{SD}$ are channel gain matrices, whose entries are i.i.d.\ Gaussian. We assume there is no free CSI at the destination and no CSIT at the source or relay. In the parallel relay geometry, there are no inter-relay links \newstuff{and there is no cooperation between these relays}. Denote the coherence time of the link between the source and Relay~$k$ as $T_{SR}(k)$ and the coherence time of the link between Relay~$k$ and the destination as $T_{RD}(k)$. \newstuff{The two parallel relay scenario will first be presented to show implementation of product superposition in multiple links.}

\subsection{Achievable DoF for Two Parallel Relays}
Consider the following channel with two parallel relays.  $T_{SR}(2) = K_2 T_{SR}(1) = K_2 K_1 T_{SD} = K_2 K_1 T$ and the destination knows the channel state of $\tH_{RD}(1)$ and $\tH_{RD}(2)$, i.e., $T_{RD}(1) = T_{RD}(2) = \infty$. Denote $N_S^{\ast}(i) = \min\{N_S,N_R(i)\}$. If Relay~1 or Relay~2 is activated alone, the achievable degrees of freedom are
\begin{align}
d_i  \twocolAlignMarker =\max_{n_R(i)} \Big\{N_S(1 - \frac{N_S}{T})  \twocolnewline
\twocolAlignMarker + \min \big\{(1 - \frac{1}{K_i}) \frac{N_S^{\ast}(i)N_S}{T},n_R(i)(1 - \frac{N_S}{T})  \big\}\Big\}.
\end{align}
When Relay~1 and Relay~2 are both activated, consider a transmission interval of length $K_2K_1T$. During each coherence interval of length $K_2K_1T$, 
Relay~1 and Relay~2 send the messages they decoded in the previous interval. The transmitted signal from Relay~1 and 2 over each sub-interval of length $T$ has the following structure and is repeated $K_2K_1$ times:
\begin{equation}
\tX_R(i) = [\ZeroMat_{n_R(i) \times N_S},\tV_{Ri}], \quad i = 1,2.
\end{equation}
During the first coherence interval of length $K_1T$, in the first sub-interval of length $T$, the source sends $\tX_S = [\tI_{N_S}, \tX_D]$. Relay~1 and Relay~2 estimate their channel. The signal at the destination is
\begin{equation}
\begin{aligned}
\tY_D & = [\tH_{SD}, \tH_{RD}(1), \tH_{RD}(2)]  
	  \begin{bmatrix}
	  \tI_{N_S} , \tV_D\\
	  \mathbf{0}_{n_R(1) \times N_S}, \tV_{R1} \\
	  \mathbf{0}_{n_R(2) \times N_S}, \tV_{R2}
	  \end{bmatrix}
	  + \tW_D \\
	  & = \Bigg[\tH_{SD},[\tH_{SD}, \tH_{RD}(1), \tH_{RD}(2)]
	  \begin{bmatrix}
\tV_D \\
\tV_{R1} \\
\tV_{R2}
\end{bmatrix}
	  \Bigg] + \tW_D.
\end{aligned}
\end{equation}
 The destination estimates $\tH_{SD}$ and decodes the messages in $\tV_D,\tV_{R1}$ and $\tV_{R2}$, which respectively provide degrees of freedom $N_S,n_R(1)$ and $n_R(2)$ per symbol over this interval of length $(T - N_S)$. 

In the remaining $K_1 - 1$ intervals of length $T$, the source sends the signal
\begin{equation}
\tX_S = \tU_1^i[\tI_{N_S} , \tV_D^i], \quad i = 1,2,\dots,K_1 - 1,
\end{equation}
where $\tU_1^i \in \mathbb{C}^{N_S \times N_S}$. Relay~1 has already estimated its channel in the first interval of length $T$. It can decode  $\tU_1^i$, achieving degrees of freedom $N_S^{\ast}(1)N_S$. The total degrees of freedom Relay~1 can decode are $(K_1 - 1)N_S^{\ast}(1)N_S$. The received signal at the destination is
\begin{align}
\tY_D \twocolAlignMarker = \bigg[\tH_{SD}\tU_1^i,[\tH_{SD}\tU_1^i, \twocolnewline
\twocolAlignMarker \quad \quad \quad \tH_{RD}(1), \tH_{RD}(2)]
\begin{bmatrix}
\tV_D \\
\tV_{R1} \\
\tV_{R2}
\end{bmatrix}
\bigg] + \tW_D.
\end{align}
 The destination estimates $\tH_{SD}\tU_1^i$ and decodes  $\tV_D$, $\tV_{R1}$, and $\tV_{R2}$.

During each of the remaining $K_2 - 1$ coherence intervals of length $K_1 T$, the transmitter sends a signal with the same structure as the first sub-interval of length $T$, multiplying it from the left by $\tU_2^j$, which contains the message for Relay~2. During each interval of length $K_1T$, the transmitted signal from the source has the following structure
\begin{align}
\tX_S \twocolAlignMarker = \tU_2^j\bigg[[\tI_{N_S} , \tX_D^1], \tU_1^1[\tI_{N_S} , \tX_D^2],\twocolnewline
\twocolAlignMarker \quad \quad \quad \quad \tU_1^2[\tI_{N_S}  \tX_D^3],\dots, 
\tU_1^{(K_1 - 1)}[\tI_{N_S} , \tX_D^{K_1}]\bigg].
\end{align}
During these $K_2 - 1$ coherence intervals with length $K_1 T$, the channel $\tH_{SR}(2)$ remains the same as in the first sub-interval of length $K_1 T$. Therefore, in each interval of length $K_1 T$, \textcolor{black}{Relay~2 can achieve degrees of freedom  $N_S^{\ast}(2)N_S$.} The total degrees of freedom Relay~2  can decode are $(K_2 - 1)N_S^{\ast}(2)N_S$ over coherence interval of length $K_2K_1T$.

At Relay~1, the first $N_S$ symbols received during the first sub-interval of length $K_1T$ are
\begin{equation}
    \tY_R(1) = \tH_{SR}^j\tU_2^j + \tW_R(1).
\end{equation}
The first $N_S$ symbols during the remaining sub-interval of length $K_1T$ received at Relay~1 are
\begin{equation}
    \tY_R(1) = \tH_{SR}^j\tU_2^j\tU_1^i + \tW_R(1), \  i = 1,\dots,K_1 - 1.
\end{equation}
Relay~1 first estimates its equivalent channel 
\begin{equation}
\tilde{\tH}_{SR}^j(1) =  \tH_{SR}^j(1) \tU_2^j,
\end{equation}
and decodes  $\tX_R^i(1)$, which provides degrees of freedom  $N_S^{\ast}(1)N_S$. The total degrees of freedom Relay~1 can decode are $(K_2 - 1)(K_1 - 1)N_S^{\ast}(1)N_S$.

At the destination, the received signal during the first sub-interval of length $K_1T$ is
\begin{align}
\tY_D 
\twocolAlignMarker = \Bigg[\tH_{SD}\tU_2^j,[\tH_{SD}\tU_2^j, \twocolnewline
\twocolAlignMarker \quad \quad \quad \tH_{RD}(1),\tH_{RD}(2)]
\begin{bmatrix}
\tV_D \\
\tV_{R1} \\
\tV_{R2}
\end{bmatrix}
\Bigg] + \tW_D,
\end{align}
 and the received signals during each remaining sub-intervals of length $K_1T$ are 
\begin{align}
\tY_D
\twocolAlignMarker = \Bigg[\tH_{SD}\tU_2^j\tU_1^i,[\tH_{SD}\tU_2^j\tU_1^i, \twocolnewline
\twocolAlignMarker \tH_{RD}(1),\tH_{RD}(2)]
\begin{bmatrix}
\tV_D \\
\tV_{R1} \\
\tV_{R2}
\end{bmatrix}
\Bigg] + \tW_D,
\end{align}
where $i = 1, \dots, K_1 - 1$, $j = 1, \dots,K_2 - 1$. The destination estimates the equivalent channel $\tH_{SD}\tU_2^j$, $\tH_{SD}\tU_2^j\tU_1^i$, and decodes  $\tX_D,\tV_{R1}$ and $\tV_{R2}$, which respectively provide degrees of freedom $N_S, n_R(1), n_R(2)$ per symbol over each time interval of length $K_1 T$. 

During each interval of length $K_2K_1T$, the source-destination link can always provide degrees of freedom $N_S(1 - \frac{N_S}{T})$ per symbol. The maximum degrees of freedom decoded at Relay~1 are $(K_1 - 1)N_S^{\ast}(1)N_S + (K_2 - 1)(K_1 - 1)N_S^{\ast}(1)N_S = K_2(K_1 - 1)N_S^{\ast}(1)N_S$. The degrees of freedom decoded at Relay~2 are $(K_2 - 1)N_S^{\ast}(2)N_S$. During each interval of length $K_2K_1T$, the number of time slots available to relays for sending data is $K_2K_1(T - N_S)$. The degrees of freedom the relays can provide via the relay-destination links are $n_R(i)K_2K_1(T - N_S), i = 1,2$. Noting that the emitted data by the relays is limited by what they can decode, we add the degrees of freedom by the two relays, normalize it per symbol, and optimize the number of transmit antennas activated at the relays. The following degrees of freedom are achievable
\begin{align}
d  = & \max_{n_R(i)} \bigg\{N_S(1 - \frac{N_S}{T}) \nonumber\\
& + \min\big\{(1 - \frac{1}{K_1})\frac{N_S^{\ast}(1)N_S}{T}, 
 n_R(1)\frac{T - N_S}{T}\big\} \nonumber\\
 & +  \min\big\{\frac{K_2 - 1}{K_1K_2}\frac{N_S^{\ast}(2)N_S}{T},n_R(2)\frac{T - N_S}{T} \big\}\bigg\}.
\end{align}

\subsection{Achievable DoF for $K$ Parallel Relays}

We now extend the ideas and techniques that were developed in the two-relay framework to the $K$-relay case. In the interest of economy of expression, the parts that are similar to the earlier discussions are condensed or omitted.

Denote with $\mathbf{T}_{SR}$ and $\mathbf{T}_{RD}$ the size-$K$ vectors containing, respectively, source-relay and relay-destination coherence times, and $\mathbf{N}_R,\mathbf{n}_R$ the number of receive and activated transmit antennas at the relays. Also, we allow a subset $k$ of relays to be used. We denote the coherence times of selected relays with size-$k$ vectors $\mathbf{T}',\mathbf{T}''$ and the number of receive and activated transmit antennas in selected relays with size-$k$ vector $\mathbf{N}',\mathbf{n}'$. The following result shows the achievable degrees of freedom, which is maximized over selected relays and their activated transmit antennas. We define a selection matrix $\mathbf{P}_{k\times K}$ containing $k$ rows of the identity matrix $\tI_{K\times K}$, corresponding to the $k$ indices of the selected relays.

\begin{theorem}
For the multi-relay system \eqref{Eq:sys_k1} and \eqref{Eq:sys_k2}, the following degrees of freedom are achievable:
\begin{alignat}{1}
\label{Eq:dof_k_relay}
& d  = \max_{\mathbf{P},\mathbf{n}',k} \Big\{N_S(1 - \frac{N_S}{T_{SD}} - \sum_{i = 1}^k \frac{n'_i}{T''_i}) \twocolnewline
\twocolAlignMarker + \sum_{i = 1}^k \min \big\{N^{\ast}_i N_S(\frac{1}{T'_{i-1}} - \frac{1}{T'_i}),  n'_i(1 - \frac{N_S}{T_{SD}} - \sum_{j = 1}^k \frac{n'_j}{T''_j})\big\} \Big\},\nonumber\\[0.1in]
&\text{subject to:} \quad  [\mathbf{T}'\; \mathbf{T}'' \;\mathbf{N}'\;\mathbf{n}'] = {\mathbf P} [ \mathbf{T}_{SR}\; \mathbf{T}_{RD} \;\mathbf{N}_R\;\mathbf{n}_R],
\end{alignat}
where $T'_0 \triangleq T_{SD}$, $\mathbf P$ is a selection matrix consisting of $k$ rows of the identity matrix of size $K$, and  $N^{\ast}_i = \min \{N_S, N'_i\}$. 
\end{theorem}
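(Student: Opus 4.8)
The plan is to generalize the two-relay construction to $k$ selected relays by (i) nesting the product superposition across the source-relay links according to their coherence times, and (ii) scheduling the relay-destination pilots into non-overlapping slots on the multiple-access side. First I would fix a selection matrix $\mathbf{P}$ and relabel the selected relays so that their source-relay coherence times are non-decreasing, $T_{SD} = T'_0 \le T'_1 \le \cdots \le T'_k$; this ordering is precisely what makes the telescoping quantity $\frac{1}{T'_{i-1}} - \frac{1}{T'_i}$ appear and be nonnegative. I would then work over a super-interval whose length $L$ is a common multiple of all the coherence times, so that every coherence block realigns at the boundaries and the per-symbol DoF is recovered by dividing the total decoded/transmitted dimensions by $L$ and letting the block lengths grow.

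On the broadcast (source-to-relays) side, the source transmits a nested product-superposition signal of the form $\tX_S = \tU_k \tU_{k-1} \cdots \tU_1 [\tI_{N_S}, \tX_D]$, where the factor $\tU_i \in \mathbb{C}^{N_S \times N_S}$ carries the message routed through relay $i$, is refreshed once every $T'_{i-1}$ symbols, and is held constant over each block of length $T'_i$. Because relay $i$ sees a source-relay channel constant over $T'_i$, it estimates $\tH_{SR}(i)$ once per such block and then decodes the subsequent $\frac{T'_i}{T'_{i-1}} - 1$ factors $\tU_i$ through the equivalent channel, each factor carrying $N^{\ast}_i N_S$ DoF. Summing over the $L/T'_i$ blocks in the super-interval and dividing by $L$ yields exactly $N^{\ast}_i N_S(\frac{1}{T'_{i-1}} - \frac{1}{T'_i})$, matching the first argument of the minimum. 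The key verification here is that the inner factors, which vary faster than relay $i$'s channel, are absorbed into the effective channel seen by the outer factors, so that relay $i$ recovers only its own layer without contamination; this is the direct multi-layer extension of the single-layer argument in Theorem~\ref{Thm:TSRinfty}.

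On the multiple-access (relays-and-source-to-destination) side, I would allocate the relay-destination pilots so that relay $i$ sends $n'_i$ pilot symbols every $T''_i$ symbols, placing all pilot slots in mutually non-overlapping positions and transmitting no data there, which avoids pilot contamination exactly as on the broadcast side of Theorem~\ref{thm:DoF_anyTc}. The destination estimates each equivalent channel $[\tH_{SD}\tU_k\cdots\tU_1, \tH_{RD}(1),\ldots]$ and then decodes. The fraction of symbols left for data is $1 - \frac{N_S}{T_{SD}} - \sum_{j=1}^k \frac{n'_j}{T''_j}$; over these slots the source contributes $N_S$ DoF per symbol and relay $i$ contributes $n'_i$ DoF per symbol, giving the source-destination term and the second argument of each minimum. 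Since relay $i$ cannot emit more than it decodes, its end-to-end contribution is the minimum of the two quantities; adding the source-destination term and maximizing over $k$, $\mathbf{P}$, and $\mathbf{n}'$ produces the stated expression.

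I expect the main obstacle to be the bookkeeping that establishes non-interference of the nested layers: one must confirm that, when relay $i$ estimates its channel and decodes, the faster-varying inner factors $\tU_1,\ldots,\tU_{i-1}$ together with the data block act as a valid codeword over a single effective coherence block of $\tH_{SR}(i)$, and simultaneously that the destination, which sees the full product $\tH_{SD}\tU_k\cdots\tU_1$ as its effective source channel, can still separate the source and relay data streams in the available data slots. The second delicate point is the misalignment of coherence blocks for arbitrary (non-divisible) ratios; this is handled by realignment over the common-multiple super-interval, so that the counting argument is exact up to edge effects that vanish asymptotically.
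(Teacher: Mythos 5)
Your proposal follows essentially the same route as the paper's own proof: ordering the selected relays by non-decreasing source-relay coherence time, nesting product superposition so that each layer $\mathbf{U}_i$ yields the telescoping per-symbol DoF $N^{\ast}_i N_S(\frac{1}{T'_{i-1}} - \frac{1}{T'_i})$ (exactly the paper's count of $T_{SD}T_1T_2/T'_{i-1} - T_{SD}T_1T_2/T'_{i}$ spare pilot sequences over the product super-interval), allocating non-overlapping relay-destination pilots on the multiple-access side, taking per-relay the minimum of decodable and transmittable DoF, and maximizing over $k$, $\mathbf{P}$, and $\mathbf{n}'$. The only blemish is the internally inconsistent phrase that $\mathbf{U}_i$ is ``refreshed once every $T'_{i-1}$ symbols'' yet ``held constant over each block of length $T'_i$,'' but your subsequent count of $T'_i/T'_{i-1}-1$ decoded factors per coherence block shows you intend the correct construction, matching the paper's two-relay scheme.
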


\begin{proof}
The transmit scheme is designed in the same spirit as Theorem~\ref{thm:DoF_anyTc}: On the multiple-access side, pilots sent from the relays and the source are allocated in different time slots; On the broadcast side, product superposition enables transmission of additional data to the relays. 
Throughout this proof, we index only the activated relays, e.g.,  Relay~$i$ refers to $i$-th activated relay. Without loss of generality, $T'_1 \leq T'_2 \leq \dots \leq T'_k$. Define $T_1 \triangleq \prod_{i = 1}^{k}T'_i$ and $T_2 \triangleq \prod_{i = 1}^{k}T''_i$. In the following, we consider a super-interval of length $T_{1}T_{2}T_{SD}$,

During each coherence interval of length $T'_i$, Relay~$i$ needs $T_{SD} T_2 T_1/T'_i$ pilot sequences each of length $N_S$ for channel estimation. Relay~$(i-1)$ needs $T_{SD} T_2 T_1/T'_{i-1}$ pilot sequences each of length $N_S$. Therefore, product superposition can be applied during $(T_{SD}T_2T_1/T'_{i-1} - T_{SD}T_2T_1/T'_{i})$ pilot sequences each of length $N_S$ to send data to Relay~{$i$}, providing $N^{\ast}_i$ degrees of freedom per symbol.

During each coherence interval of length $T_{SD}$ in the source-destination link, $N_S$ pilot symbols are transmitted. In each super-interval (see above) $T_1T_2$ pilot sequences of length $N_S$ are transmitted.

For channel estimation between Relay~$i$ and the destination, during the super-interval of length $T_1T_2T_{SD}$, the destination needs $T_1T_2T_{SD}/T''_{i}$ pilot sequences of length $n'_i$.

Therefore, In each super-interval, the source and relays can use $(T_{SD}T_1T_2 - N_ST_1T_2 - \sum_{i=1}^k\frac{n'_i}{T''_i}T_1T_2T_{SD})$ time slots to send data. The source has $N_S$ degrees of freedom available per transmission, and Relay~{$i$} has $n'_i$ degrees of freedom per transmission.

The decodable degrees of freedom Relay~$i$ are at most $N'_iN_S(T_{SD}T_2T_1/T'_{i-1} - T_{SD}T_2T_1/T'_{i})$. Therefore, the degrees of freedom Relay~$i$ can provide are:
\begin{align*}
    \twocolAlignMarker \min \big\{N'_iN_S(\frac{T_{SD}T_2T_1}{T'_{i-1}}- \frac{T_{SD}T_2T_1}{T'_i}),  \twocolnewline
    \twocolAlignMarker \quad \quad \quad n'_i(T_{SD}T_1T_2 - N_ST_1T_2 - \sum_{i=1 }^k\frac{n'_i}{T''_i}T_1T_2T_{SD})\big\},
\end{align*} 
 the minimum of the degrees of freedom Relay~$i$ can receive and can transmit.

We can now sum the degrees of freedom by the source and the relays and normalize it per symbol. Optimizing the relays to be activated (over $k$ and $\mathbf{P}$) and the number of transmit antennas at the relays $\mathbf{n}'$, the degrees of freedom in \eqref{Eq:dof_k_relay} are achieved. This concludes the proof.
\end{proof}

Figure~\ref{fig:DoF3} and \ref{fig:DoF4} show the achievable degrees of freedom with two parallel relays equipped with $N_S = 3$, $N_D = 6$, $N_R(1) = N_R(2) = 1$ antennas. In Figure~\ref{fig:DoF3}, $T_{SD} = 5$, $T_{RD}(1) = T_{RD}(2) = \infty$, $\frac{T_{SR}(1)}{T_{SR}(2)} = \frac{2}{3}$. In Figure~\ref{fig:DoF4}, $T_{SD} = 5$, $T_{RD}(1) = T_{RD}(2) = \infty$, $T_{SR}(1) = 6$ and different $\frac{T_{SR}(2)}{T_{SR}(1)}$. 

\begin{figure}
\begin{minipage}{3.25in}
\centering
\includegraphics[width=3.25in]{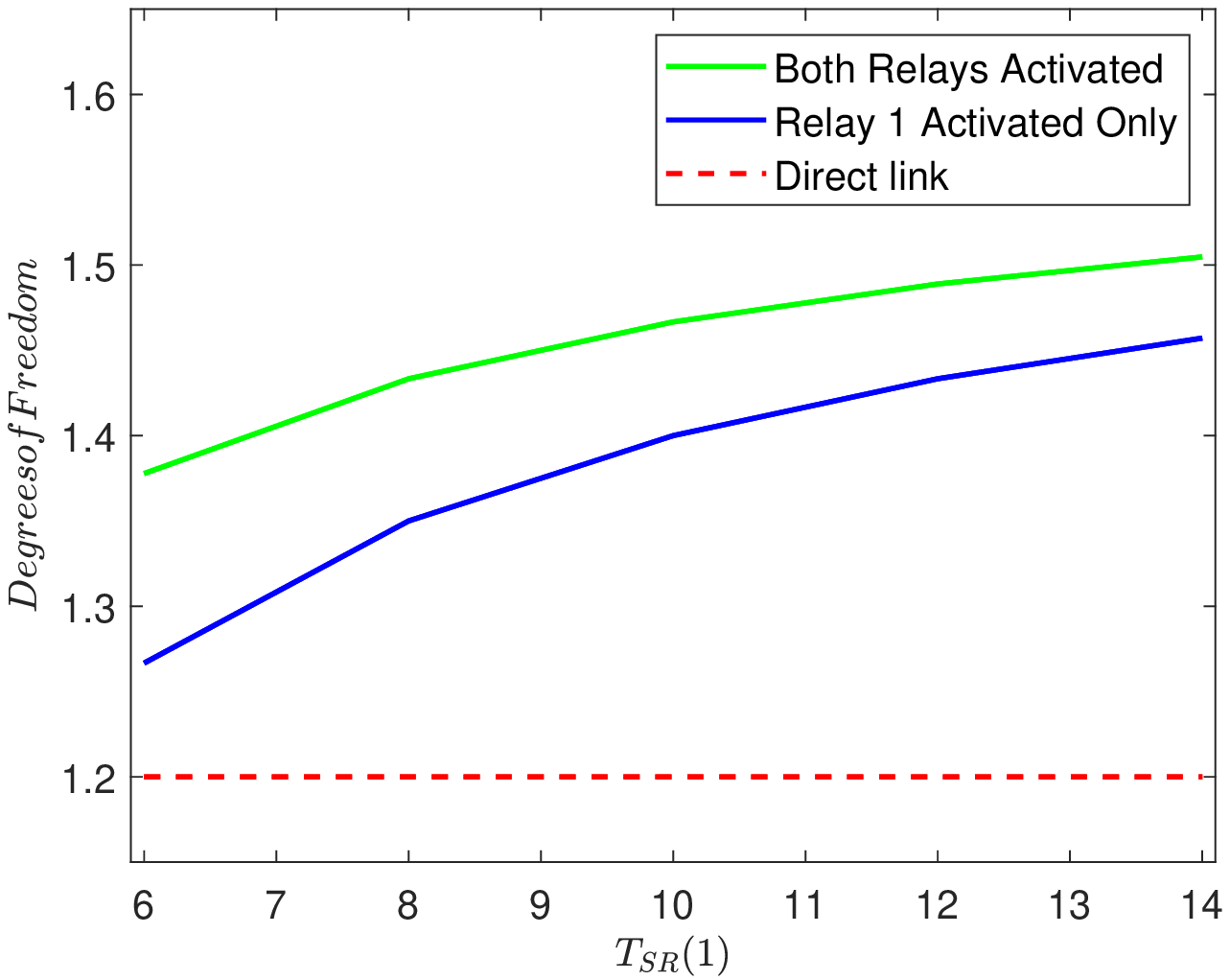}
\caption{DoF subject to $\frac{T_{SR}(1)}{T_{SR}(2)}=\frac{2}{3}$.}
\label{fig:DoF3}
\end{minipage}
\begin{minipage}{3.25in}
\centering
\includegraphics[width=3.25in]{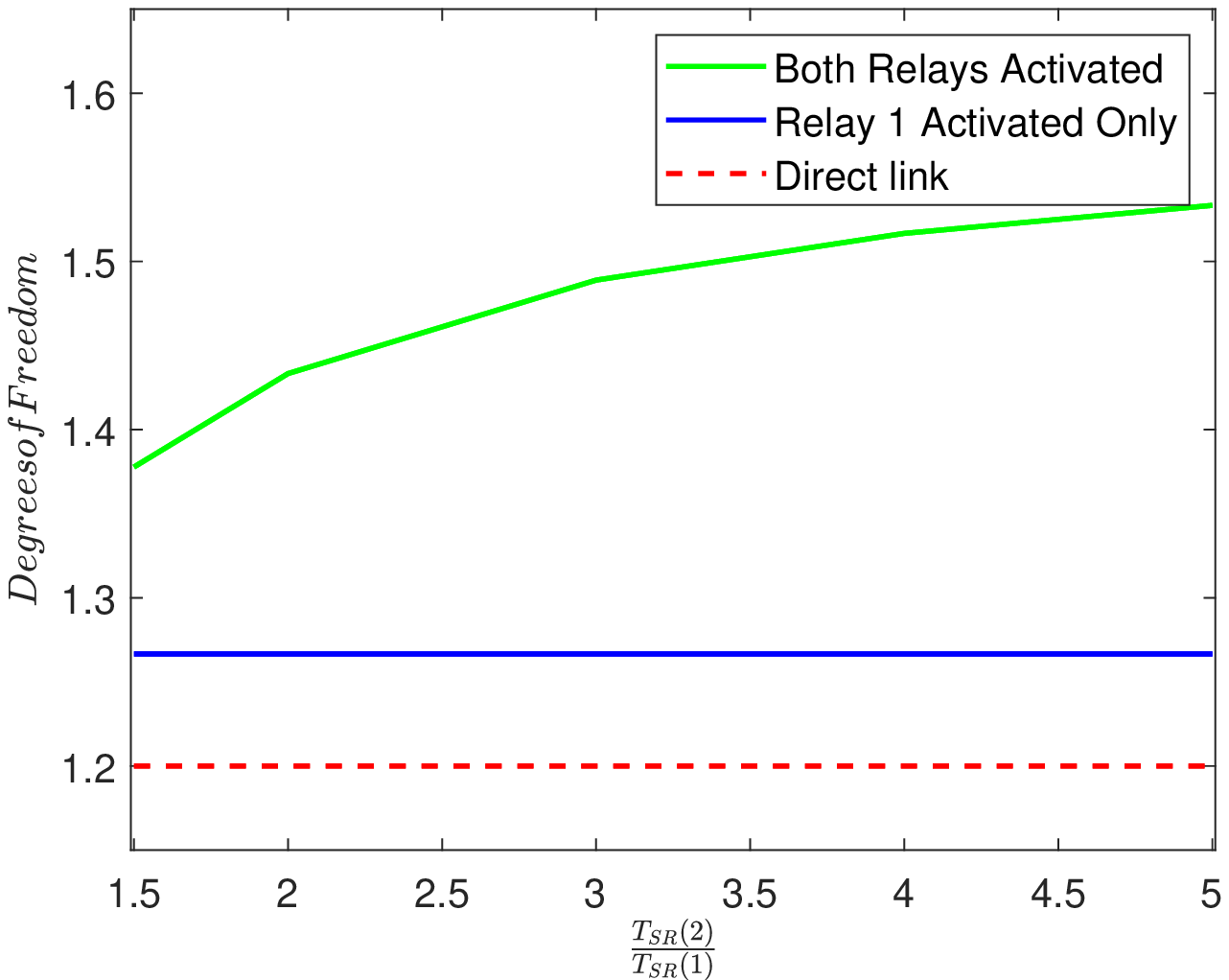}
\caption{DoF subject to $T_{SR}(1) = 6$ with different $\frac{T_{SR}(2)}{T_{SR}(1)}$.}
\label{fig:DoF4}
\end{minipage}
\end{figure}

\section{Conclusion}
\label{sec:conclusion}
This paper studies MIMO relays under coherence diversity. The main contribution of this paper is showing new degrees of freedom gains in the relay channel when the relay links experience unequal coherence times. We  propose and analyze transmission schemes achieving these gains. A key novelty of this paper is to carefully weave product superposition into a signaling structure that controls  the interference by the relay(s) and the source on pilots of source and relay(s), respectively. This is combined with a relay scheduling scheme to balance the relaying gains against the interference cost of relay transmission during source pilots. The proposed techniques are introduced in the case where link coherence times are aligned, and then extended to unaligned coherence intervals. We also analyze multiple parallel relays, and the corresponding achievable degrees of freedom are calculated.

The results of the paper can be extended to the half-duplex case in a straight forward manner: like the full-duplex case, the minimum of the DoF of the broadcast and the DoF of MAC must be calculated (this time subject to half-duplex constraint). Then, the ratio of the two intervals must be optimized to give the best degrees of freedom.

\bibliographystyle{IEEEtran}
\bibliography{IEEEabrv,mybib}

\end{document}